\newtheorem{theorem}{Theorem} 
\newtheorem{lemma}{Lemma}
\newtheorem{corollary}{Corollary}
\newtheorem{remarka}{Remark} 
\newenvironment{proof}{{\bf Proof.}}{\hfill\rule{2mm}{2mm}} 
\newenvironment{pproof}[1]{\noindent{\textbf{Proof of #1.}}}{\hfill\rule{2mm}{2mm}}
\newcommand{\boundellipse}[3]
{(#1) ellipse (#2 and #3)}
\title{Hierarchical Clustering: New Bounds and Objective}
\author{Mirmahdi Rahgoshay \\
Department of Computing Science\\ 
University of Alberta
\and Mohammad R. Salavatipour\footnote{Supported by NSERC.}\\
Department of Computing Science\\ 
University of Alberta}
\date{}
\begin{document}
\maketitle
\pagenumbering{arabic}
\begin{abstract}
Hierarchical Clustering has been studied and used extensively as a method for analysis of data. More recently, Dasgupta [STOC2016] initiated theoretical study and analysis of these clustering tools by defining
precise objective functions. In a hierarchical clustering, one is given a set of $n$ data points along with a notion of similarity/dis-similarity between them. More precisely for each two items $i$ and $j$ we are given a weight $w_{i,j}$ denoting their similarity/dis-similarity. The goal is to build a recursive (tree like) partitioning of the data points (items) into successively smaller clusters which is represented by a rooted tree where the leaves correspond to the items and each internal node corresponds to a cluster of all the items in the leaves of its subtree. Typically, the goal is to have the 
items that are relatively similar, to separate at deeper levels of the tree (and hence stay in the same cluster as deep as possible).
Dasgupta [STOC2016] defined a cost function for a tree $T$ to be $Cost(T) = \sum_{i,j \in [n]} \big(w_{i,j} \times |T_{i,j}| \big)$ where
$T_{i,j}$ is the subtree rooted at the least common ancestor of $i$ and $j$ and presented the first approximation algorithm for such clustering.
Then Moseley and Wang [NIPS2017] considered the dual of Dasgupta's objective function for similarity-based weights,
where the objective is to maximize $Rev_{Dual}(T) = \sum_{i,j \in [n]} \big(w_{i,j} \times (n - |T_{i,j}|) \big)$.
They showed that both random partitioning and average linkage have approximation ratio $1/3$ which has been improved to $0.336379$ [Charikar et al. SODA2019], $0.4246$ [Ahmadian et al. AISTATS2020], and more recently to $0.585$ [Alon et al. COLT2020].

Later Cohen-Addad et al. [JACM2019] considered the same objective function as Dasgupta's 
but for dissimilarity-based metrics: $Rev(T)=\sum_{i,j\in [n]} \big(w_{i,j}\times |T_{i,j}|\big)$,
where $w_{i,j}$ is the weight of dissimilarity between two nodes $i$ and $j$. In this version a good clustering should have larger $T_{i,j}$ when $w_{i,j}$ is relatively large. It has been shown that both random partitioning and average linkage have ratio $2/3$ which has been only slightly improved to $0.667078$ [Charikar et al. SODA2020]. 

Our first main result is to improve this ratio of 0.667078 for $Rev(T)$. We achieve this by building upon the earlier work and use a more delicate algorithm and careful analysis which can be refined to achieve approximation $0.71604$.

We also introduce a new objective function for dissimilarity-based Hierarchical Clustering. Consider any tree $T$, we define $H_{i,j}$ as the number of $i$ and $j$'s common ancestors in $T$. In other words, when we think of the process of building tree as a top-down procedure, $H_{i,j}$ is the step in which $i$ and $j$ are separated into two clusters (they were stayed within the same cluster for $H_{i,j}-1$ many steps). 
Intuitively, items that are similar are expected to remain within the same cluster as deep as possible and items that are dissimilar are to be separated into two different clusters higher up in the tree.
So, for dissimilarity-based metrics, it is better to separate two dissimilar items $i$ and $j$ at top levels and have low values of $H_{i,j}$, so we suggest the cost of each tree $T$, which we want to minimize, to be $Cost_H(T) = \sum_{i,j \in [n]} \big(w_{i,j} \times H_{i,j} \big)$. We present a $1.3977$-approximation for this objective.
\end{abstract}


\section{Introduction}
Hierarchical Clustering has been studied and used extensively as a method for analysis of data. Suppose we are given a set of $n$ data points (items)
along with a notion of similarity between them. 
The goal is to build a hierarchy of clusters, where each level of hierarchy is a clustering of the data points
that is a refined clustering of the previous level, and data points that are more similar stay together in deeper levels
of hierarchy. In other words, we want to output a 
 recursive partitioning of the items into successively smaller clusters, which are represented by a rooted tree, where the root corresponds to the set of all items, the leaves correspond to the items, and each internal node corresponds to the cluster of all the items in the leaves of its subtree. 
Many well-established methods have been used for Hierarchical Clustering, including some bottom-up agglomerative methods like single linkage, average linkage, and complete linkage, and some top-down approaches like the minimum bisection algorithm. In the bottom-up approaches, one starts from singleton clusters and at each step two clusters
that are more similar are merged. For instance, in the average linkage, the average of pair-wise similarity of points in two clusters is computed and clusters which have the highest 
average are merged into one and this continues until one cluster (of all points) is created. In the top-down approaches, one starts with a single cluster (of all points) and each
step a cluster is broken into two (or more) smaller ones. One such example is bisecting $k$-means (see \cite{Jia10}).
Although these methods have been around for a long time, it was only recently that researchers tried to formalize the goal and objective of hierarchical clustering.


Suppose the set of data points of input are represented as the vertices of a weighted graph $G=(V,E)$ where for any two nodes $i$ and $j$, $w_{i,j}$ is the weight (similarity or dissimilarity)
between the two data points. Then one can think of a hierarchical clustering as a tree $T$ whose leaves are nodes of $G$ and each internal node corresponds to the subset of nodes
of the leaves in that subtree (hence root of $T$ corresponds to $V$).
For any two data points $i$ and $j$ we use $T_{i,j}$ to denote the
subtree rooted at the least common ancestor (LCA) of $i$ and $j$ and $w_{i,j}$ represents
the similarity between $i,j$.
In the very first attempt to define a reasonable objective function, Dasgupta \cite{D16} suggested the cost of each tree $T$, which we want to minimize, to be:
\begin{equation}\label{min-sim}
Cost(T) = \sum_{i,j \in [n]} \big(w_{i,j} \times |T_{i,j}| \big).
\end{equation}

In other words, for each two items, $i$ and $j$, the cost to separate them at a step is the product of their weight and the size of the cluster at the time we separate them. Intuitively, this means that the clusters deeper in the tree would contain items that are relatively more similar. Dasgupta \cite{D16} proved that the optimal tree must be a binary tree. He then analyzed this objective function on some canonical examples (such as complete graph) and proved that it is $NP$-hard to find the tree with the minimum cost. Finally, he showed that a simple top-down heuristic graph partitioning algorithm, namely using taking the (approximately) minimum sparsest cut, would have a provably good approximation ratio.

An alternative interpretation of this cost function is in terms of cuts. In a top-down approach at each step we must partition a set of items into two groups (recall that the optimal tree is binary). We can set a cost for each step such that the total cost would be the summation of the costs of all the steps. If in one step we partition set $A \cup B$ of items into two sets $A$ and $B$, then the cost for this step would be $Cost(A,B) = \lvert A \cup B \rvert \times w(A, B)$ where $w(A,B)$ is the summation of all pairwise similarities between members of $A$ and $B$. Considering this, taking the minimum cut as the partition at each step seems a reasonable choice, although we will see later that this would not give a good approximation ratio. A nice property of this objective function is its modularity. More precisely, suppose $u$ is an internal node in tree $T$. If we replace $T_u$, the subtree rooted at $u$, by another subtree $T'_u$ containing the same set of items as leaves, and denote the new tree by $T'$, then the change in the total cost of the tree is only the difference between the costs of $T_u$ and $T'_u$: $Cost(T') = Cost(T) + Cost (T'_u) - Cost (T_u)$.



Dasgupta's showed that the top-down heuristic, which takes the minimum sparsest cut (approximately) at each step, has approximation factor of $O(\alpha \log n)$ \cite{D16}, where $\alpha$ is the best approximation ratio for minimum sparsest cut problem which is $O(\sqrt{\log n})$ \cite{ARV09}. 
Later Roy and Pokutta \cite{RP16} improved the previous result by giving an $LP$-based $O(\log n)$-approximation algorithm for the same objective function.
In more recent work Charikar and Chatziafratis \cite{CC17} showed that the algorithm of \cite{D16} in fact has approximation ratio of $O(\alpha)$.

Cohen-Addad et al.~\cite{CKMM19} considered the same objective function but for dissimilarity-based graphs, 
where $w_{i,j}$ is the weight of dissimilarity between two nodes $i$ and $j$. In this version a good clustering should have larger $T_{i,j}$
when $w_{i,j}$ is relatively large. Here the objective is to maximize the following formula:
\begin{equation}\label{max-dissim}
	Rev(T) = \sum_{i,j \in [n]} \big(w_{i,j} \times |T_{i,j}| \big)
\end{equation}

They showed that the random top-down partitioning algorithm is a $2/3$-approximation 
and the classic average-linkage algorithm gives a factor $1/2$ approximation 
(later \cite{CCN19} mentioned that the same analysis will show that it is actually $\frac{2}{3}$-approximation), 
and provided a simple top-down local search algorithm that gives a factor 
$(\frac{2}{3} - \epsilon)$-approximation.
They take an axiomatic approach for defining `good' objective functions 
for both similarity and dissimilarity-based hierarchical clustering by characterizing a set of admissible objective functions (that includes the one introduced by Dasgupta) that have the property that 
when the input admits a `natural' ground-truth hierarchical clustering, the ground-truth clustering has an optimal value.
They also provided a similar analysis showing that 
the algorithm of Dasgupta \cite{D16} (using sparsest cut algorithm) has ratio of $O(\alpha)$; their analysis
of this is different (and slightly better) than \cite{CC17}.
More recently Chatziafratis et al.~\cite{CGL20} showed that it is hard to approximate $Rev(T)$
within a constant of $9159/9189 = 0.996735$ assuming the Unique Games Conjecture.

\subsection{More Related Work}

Moseley and Wang \cite{MW17} considered the dual of Dasgupta's objective function for similarity-based graphs,
where the objective is to maximize the following formula:
\begin{equation}\label{max-sim}
	Rev_{Dual}(T) = \sum_{i,j \in [n]} \big(w_{i,j} \times (n - |T_{i,j}|) \big)
\end{equation}

For each tree $T$ we have $Cost(T) + Rev_{Dual}(T) = nW$ where $W$ is the summation of the weights of similarity 
over all pairs of the items which is not dependent to the structure of the tree.
This means that the optimum solution for both objective functions is the same.
They showed that the classic average-linkage algorithm 
as well as the random top-down partitioning algorithm have approximation ratio $1/3$ 
and provided a simple top-down local search algorithm 
that gives a $(\frac{1}{3}- \epsilon)$-approximation.

Later Charikar, Chatziafratis and Niazadeh \cite{CCN19} proved that the average-linkage algorithm 
is tight for both objective functions (\ref{max-dissim}) and (\ref{max-sim}).
They also gave two top-down algorithms to beat the average-linkage ratios.
More specifically for maximizing $Rev_{Dual}(T)$ (\ref{max-sim}),
they provided an $SDP$-based algorithm which has approximation ratio $0.336379$ (slightly better than $1/3$).
For the maximizing dissimilarity-based graphs (\ref{max-dissim}), they gave a top-down algorithm 
with a factor $0.667078$ approximation (slightly better than $2/3$).
We will go through this algorithm in more details shortly as one of our results is to improve their approach.

More recently, Ahmadian et al.~\cite{ACELMMY20} provided a $0.4246$-approximation algorithm 
for maximizing $Rev_{Dual}(T)$.
What they do is to detect the cases where average linkage is not good and show that in those cases
the maximum uncut bisection would gain a good fraction of the objective of the optimum solution in the very first step.
So, by taking the better of the two of average-linkage and maximum uncut bisection 
(if we can solve it optimally in polynomial time) in the first step and average linkage for the remaining steps
the approximation ratio would be $4/9$.
But the best known algorithm for maximum uncut bisection has approximation ratio $\rho = 0.8776$ \cite{ABG16}, 
so the ratio of their algorithm decreases from $4/9$ to $0.4246$ 
which is still much better than the previous best $0.3363$-approximation of \cite{CCN19}. 
They also complemented their positive results by providing the APX-hardness
(even for 0-1 similarities), under the Small Set Expansion hypothesis \cite{RS10}.

More recently, Alon et al.~\cite{AAV20} proved that the algorithm of \cite{ACELMMY20} is actually giving a $2\rho/3=0.585$-approximation by proving the existence of a better maximum uncut bisection.
This is considered the third improvement over the Average-Linkage for revenue maximization of similarity-based Hierarchical Clustering (\ref{max-sim}), while the best algorithm for the revenue maximization of dissimilarity-based version (\ref{max-dissim}) is still only slightly better than the average linkage ($0.667$ vs $2/3$).

Chatziafratis et al.~\cite{CNC18} considered a version of the problem where we have 
some prior information about the data that imposes constraints on the clustering hierarchy
and provided provable approximation guarantees for two simple top-down algorithms on similarity-based graphs.
More recently, Bakkelund \cite{B20} considered 
order preserving hierarchical agglomerative clustering which is
a method for hierarchical clustering of directed acyclic graphs and 
other strictly partially ordered data that preserves the data structure.

Emamjomeh-Zadeh and Kempe \cite{ED18} considered adaptive Hierarchical Clustering using the notion of ordinal queries,
where each ordinal query consists of a set of three elements, and the response to a query reveals 
the two elements (among the three elements in the query) which are “closer” to each other.
They studied active learning of a hierarchical clustering using only ordinal queries 
and focused on minimizing the number of queries even in the presence of noise.

Wang and Wang \cite{WW20} suggested that Dasgupta's cost function is only effective 
in differentiating a good HC-tree from a bad one for a fixed graph, 
But the value of the cost function does not reflect how well an input similarity graph 
is consistent with a hierarchical structure and present a new cost function, 
which is based on Dasgupta's cost function but gives a cost between $0$ and $1$ to each tree.

Charikar et al.~\cite{CCNY19} were the first one to consider Hierarchical Clustering for Euclidean data and
showed an improvement is possible for similarity-based graphs on objective (\ref{max-sim}). 
Later Wang and Moseley \cite{WM20} considered objective (\ref{max-dissim}) for Euclidean data
and showed that every tree is a $1/2$-approximation if the distances form a metric and 
developed a new global objective for hierarchical clustering in Euclidean space and
proved that the optimal 2-means solution results in a constant approximation for their objective.

Hogemo et al.~\cite{HPT20} considered the Hierarchical Clustering of unweighted graphs
and introduced a proof technique, called the normalization procedure, 
that takes any such clustering of a graph $G$ and iteratively improves it 
until a desired target clustering of $G$ is reached.
More recently Vainstein et al.~\cite{VCCRMA21} proved structural lemmas for both objectives (\ref{max-sim}) and (\ref{max-dissim})
allowing to convert any HC tree to a tree with constant number of internal nodes while incurring an arbitrarily small loss.
They managed to obtain approximations arbitrarily close to $1$, 
if not all weights are small (i.e., there exist constants $\epsilon$ and $\delta$ 
such that the fraction of weights smaller than $\delta$, is at most $1-\epsilon$).

Chehreghani \cite{H20} proposed a hierarchical correlation clustering method 
that extends the well-known correlation clustering to produce hierarchical clusters.
Later Vainstein et al.~\cite{VCCRMA21} provided a $0.4767$-approximation and 
presented nearly optimal approximations for complementary similarity/dissimilarity weights.
There are also some other works including 
the ones trying to reduce the time of the current algorithms \cite{MVW21} and \cite{DELMS21}, 
and those introducing Fair Hierarchical Clustering \cite{AEK0MMPV20} and Online Hierarchical Clustering \cite{KRSCCK19}.

\subsection{Our Results}
Our first result is to consider the revenue maximization of dissimilarity (i.e. objective (\ref{max-dissim})) and improve upon the algorithm of \cite{CCN19} which has ratio 0.667078:

\begin{theorem}\label{hc:max-dissim}
For hierarchical clustering on dissimilarity-based graphs, there is an approximation algorithm to maximize objective of (\ref{max-dissim}) with ratio $0.71604$.
\end{theorem}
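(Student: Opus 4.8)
The plan is to build on the top-down framework of Charikar, Chatziafratis and Niazadeh~\cite{CCN19}, sharpening both the algorithm and its analysis. Recall that for objective~(\ref{max-dissim}) the random top-down partition already gives $2/3$, and the slack in that bound comes entirely from the very first cut: if the first split is $(A,B)$, the pairs cut at the root contribute $n\cdot w(A,B)$, whereas they could contribute as much as $(|A|+|B|)\cdot w(A,B)=n\cdot w(A,B)$ in the most favorable tree too, so the real loss is hidden in how well the recursion treats the uncut pairs inside $A$ and inside $B$. Concretely, I would first prove a structural/charging lemma: for any tree $T$ on a vertex set $S$, $Rev_S(T)=|S|\,W(S)-\sum_{u}\,\bigl(\text{something like }w(A_u,B_u)\cdot(|A_u|\wedge|B_u|)\bigr)$ or, more usefully, a bound of the form $\OPT_S \le |S|W(S) - (\text{a cut-based lower-order term})$, so that proving a $c$-approximation reduces to showing the algorithm's first cut $(A,B)$ satisfies an inequality like $|S|\,w(A,B) + c\cdot(\text{ideal revenue on }A) + c\cdot(\text{ideal revenue on }B)\ \ge\ c\cdot\OPT_S$, with the induction hypothesis supplying the recursive terms.

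The second ingredient is the choice of the first cut. Following the ``take the better of two candidates'' idea used by Ahmadian et al.~\cite{ACELMMY20} for the similarity objective, I would run at each top-level step the better of (i) average-linkage-style behaviour and (ii) an approximate \emph{max-cut}-type partition that is good exactly when average linkage is bad — intuitively, dissimilarity revenue wants heavy pairs separated high up, so a cut carrying a large fraction of the total weight $W(S)$ is precisely what rescues the worst case. Parameterize by $\beta = w(A,B)/W(S)$ for the first cut and by the balance $\min(|A|,|B|)/|S|$; then split the analysis into the regime where some cut has large $\beta$ (use candidate (ii): the root already banks $n\cdot w(A,B)$, which is a large constant times $\OPT_S\le nW(S)$) versus the regime where every cut is ``light'' (use candidate (i): average linkage is near-optimal, mirroring the $2/3$-tightness argument of~\cite{CCN19}, and a careful accounting of the second-order term pushes past $0.667$). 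The worst case over $\beta$ and the balance parameter, combined with the best available approximation ratio for the max-cut-like subproblem we invoke, yields the numerical constant.

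The main obstacle — and where the delicacy of the claimed $0.71604$ really lives — is making the two regimes overlap: one must show that for \emph{every} value of $\beta$, \emph{at least one} of the two candidate cuts certifies the ratio $c$, and the crossover value of $c$ is determined by a small optimization (likely a handful of cases on the balance of the cut and on whether we recurse on the larger or smaller side with the inductive bound). A secondary technical point is that candidate (ii) cannot be solved exactly in polynomial time, so, exactly as the bound $4/9 \to 0.4246$ degrades in~\cite{ACELMMY20}, our clean target has to be multiplied by the approximation factor $\rho$ of the relevant uncut-bisection / max-cut routine; arranging the algorithm so that this factor enters only on the ``light weight'' side, or only multiplicatively against a term that is already a large fraction of $\OPT$, is what keeps the final constant at $0.71604$ rather than something much weaker. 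I would finish by verifying the base cases $|S|\le 2$ of the induction and checking that the per-step running time is polynomial, so the overall top-down procedure runs in polynomial time.
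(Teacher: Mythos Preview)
Your proposal misses the central algorithmic idea of the paper: a \emph{peel-off} phase that, before any cut is taken, iteratively removes vertices $v$ whose (current) weighted degree $W_v$ exceeds a threshold $\gamma\cdot 2W/n$. This is not a cosmetic tweak. As \cite{CCN19} already observed (and the present paper recalls), recursive max-cut is stuck at $2/3$ on the instance consisting of an $\epsilon n$-vertex clique plus zero-weight edges; average linkage is also tight at $2/3$ on that instance, so ``better of average-linkage and max-cut'' gives you nothing new there. The peel-off is precisely the mechanism that beats this family --- it strips the clique vertices one by one near the root, mimicking the optimum --- and the whole analysis is organized around the fraction $R$ of total weight removed in this phase. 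Without it, your two candidates collapse to $2/3$ in the worst case, and the ``crossover optimization'' you describe has no slack to work with.

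Beyond the missing peel-off, the overall shape of the argument is also different from what you sketch. The paper does \emph{not} use induction or a per-level ``take the better of two cuts'' recursion; it does a single peel-off phase, then either one approximate max-cut step or random partitioning on the remainder (after the one max-cut the rest is just random and its revenue is not even counted). The analysis compares the algorithm to $nW$ via the parameter $\epsilon$ defined by $\OPT=(1-\epsilon)nW$: if $\epsilon$ is large, pure random partitioning already gives $\alpha$; if $\epsilon$ is small, a layered decomposition of the \emph{optimum} tree (find the deepest cluster of size $>n(1-\delta)$) certifies a large cut among the surviving ``blue'' vertices, and Goemans--Williamson picks up an $\alpha_{GW}$ fraction of it. With a single pair $(\gamma,R^*)$ this yields $0.6929$. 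The jump to $0.71604$ is obtained not by a cleverer cut but by running the same algorithm with a carefully chosen \emph{sequence} of parameter pairs $(\gamma_i,R^*_i)$ covering nested intervals $[\epsilon_{i+1},\epsilon_i]$ and taking the best output; this multi-run trick is absent from your proposal and is where the final numerical constant actually comes from.
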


To prove this, we build upon the work of \cite{CCN19} and present an algorithm that takes advantage of some conditions
in which their algorithm fails to perform better. Since the final algorithm (and its analysis) is more complex, and
for ease of exposition, we start with a simpler algorithm which achieves ratio $0.6929$. Then through a series of
improvements we show how we can get to $0.71604$ ratio.

Next we introduce a new objective function for hierarchical clustering and present approximation algorithms for this new objective.
The intuition for this new objective function is that we expect the items that are more similar to remain
in the same cluster for more steps, i.e. the step in which they are separated into different clusters is
one which is further away from the root of the tree.
Consider any tree $T$, we define $H_{i,j}$ as the number of common ancestors of $i$ and $j$ in $T$.
In other words, when we think of the process of building the tree as a top-down procedure, 
$H_{i,j}$ is the step in which we decide to separate $i$ and $j$ from each other (they were decided to be together in $H_{i,j}-1$ many steps).
Intuitively, similar items are supposed to stay together until deeper nodes in the tree and hence are expected to have larger $H_{i,j}$ values
whereas dissimilar items are supposed to be separated higher up in the tree. For instance, looking at a phylogenetic or genomic tree, the species that are most dissimilar
are separated higher up in the tree (i.e. have small $H_{i,j}$) whereas similar species are separated at deep nodes of the tree and hence have high $H_{i,j}$ values.
For dissimilarity-based graphs, we propose to minimize the following objective:
\begin{equation}\label{dissim-hij}
	Cost_H(T) = \sum_{i,j \in [n]} \big(w_{i,j} \times H_{i,j} \big).
\end{equation}

The problem we are looking to solve here is to find a full binary tree with the minimum $Cost_H(.)$.
It is easy to see that any algorithm 
that gives a balanced binary tree would have approximation ratio at most $O(\log n)$ since the height of such trees is $O(\log n)$. Furthermore, it is not hard to verify that the average-linkage algorithm would not perform well for this new objective function.
The following example is an instance for which the cost of the solution of the average-linkage algorithm is at least 
$O(\frac{n}{\log n})$ times the cost of the optimum solution.
Consider a graph with $n$ vertices: $v_1, v_2, ..., v_n$. 
Then for each $2 \leq j \leq n$ and for each $1 \leq i < j$ let $w_{i,j} = j-1$.
In this graph the summation of all the edges would be $O(n^3)$ but running average linkage on this graph
would result in a tree $T$ with $Cost_H(T) = O(n^4)$ while the optimum tree (as well as any balanced binary tree) will have cost $O(n^3 \log n)$.

Our second main result is the following:

\begin{theorem}\label{hc:thm:h}
	For hierarchical Clustering on dissimilarity-based graphs, a top-down algorithm that
	chooses the approximated weighted max-cut at each step, 
	would be a $\frac{4\alpha_{GW}}{4\alpha_{GW}-1}$-approximation algorithm to minimize $Cost_H(T)$, 
	where $\alpha_{GW}$ is the ratio of the max-cut approximation algorithm.
\end{theorem}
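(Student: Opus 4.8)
The plan is to run the analysis recursively on the subtrees the algorithm builds. For a set $S$ write $w(S)=\sum_{\{i,j\}\subseteq S}w_{i,j}$, $w(A,B)=\sum_{i\in A,j\in B}w_{i,j}$, let $\mathrm{maxcut}(S)$ be the maximum cut of $G[S]$, and let $\OPT(S)$ (resp.\ $\mathrm{ALG}(S)$) be the value of $Cost_H$ on the optimal binary tree on $S$ (resp.\ on the tree the algorithm produces on $S$). The first step is the structural identity $Cost_H(T)=\sum_{u}w(S_u)$ over internal nodes $u$, which gives the single-level recursions $\mathrm{ALG}(S)=w(S)+\mathrm{ALG}(A)+\mathrm{ALG}(B)$, where $(A,B)$ is the cut the approximate max-cut oracle returns on $S$, and $\OPT(S)=w(S)+\min_{(A',B')}\bigl(\OPT(A')+\OPT(B')\bigr)$. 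Since the oracle guarantees $w(A,B)\ge\alpha_{GW}\,\mathrm{maxcut}(S)$ and a random bipartition shows $\mathrm{maxcut}(S)\ge\tfrac12 w(S)$, the weight handed to the two recursive calls is $w(A)+w(B)=w(S)-w(A,B)\le w(S)-\alpha_{GW}\,\mathrm{maxcut}(S)$; unrolling this yields the crude bound $\mathrm{ALG}(S)\le\bigl(1+\tfrac{2}{\alpha_{GW}}\bigr)w(S)-2\,\mathrm{maxcut}(S)$.

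Next I would build a matching family of lower bounds for $\OPT$. The basic one is $\OPT(S)\ge 2w(S)-\mathrm{maxcut}(S)$, since any binary tree pays $w(S)$ at its root and at least $w(S)-\mathrm{maxcut}(S)$ across the next level. To strengthen this in the regime where $\mathrm{maxcut}(S)$ is close to $\tfrac12 w(S)$ — the only regime where the crude bounds are not enough — I would use the elementary merge inequality
\[
\mathrm{maxcut}(S)\ \ge\ \mathrm{maxcut}(A)+\mathrm{maxcut}(B)+\tfrac12\,w(A,B)\qquad\text{for every bipartition }(A,B)\text{ of }S,
\]
obtained by taking optimal cuts of $G[A]$ and $G[B]$ and observing that of the two ways of gluing them into a cut of $S$, one severs at least half of the $A$–$B$ edges. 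Feeding this into the recursion for $\OPT$ turns a bound $\OPT(S)\ge\lambda w(S)-\nu\,\mathrm{maxcut}(S)$ into $\OPT(S)\ge(1+\lambda)w(S)-\bigl(\lambda+\tfrac{\nu}{2}\bigr)\mathrm{maxcut}(S)$; starting from $\OPT(S)\ge w(S)$ and iterating gives $\OPT(S)\ge(k+1)w(S)-\nu_k\,\mathrm{maxcut}(S)$ with $\nu_k=2k-2+2^{1-k}$, so $\OPT(S)\to 2w(S)$ as $\mathrm{maxcut}(S)\to\tfrac12 w(S)$ (matching the complete graph, where $\OPT\approx 2w$). I would also record the restriction inequality $\OPT(A)+\OPT(B)\le\OPT(S)-w(A,B)$, valid for \emph{any} bipartition: contracting the optimal tree of $S$ onto $A$ and onto $B$ only shortens LCA depths and discards the $A$–$B$ pairs, whose combined $Cost_H$-contribution is at least $w(A,B)$.

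Finally I would prove $\mathrm{ALG}(S)\le\tfrac{4\alpha_{GW}}{4\alpha_{GW}-1}\OPT(S)$ by induction on $|S|$, writing $r=\tfrac{4\alpha_{GW}}{4\alpha_{GW}-1}$. In the inductive step, combine $\mathrm{ALG}(S)=w(S)+\mathrm{ALG}(A)+\mathrm{ALG}(B)$ with the inductive hypothesis and with the restriction inequality to get $\mathrm{ALG}(S)\le w(S)+r\bigl(\OPT(S)-\alpha_{GW}\,\mathrm{maxcut}(S)\bigr)$; when $\mathrm{maxcut}(S)/w(S)\ge t_0:=\tfrac{4\alpha_{GW}-1}{4\alpha_{GW}^{2}}$ this already closes the induction. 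When $\mathrm{maxcut}(S)/w(S)<t_0$, I would instead feed the crude $\mathrm{ALG}$ upper bound and the strengthened $\OPT$ lower bound (with the index $k$ tuned to $\mathrm{maxcut}(S)/w(S)$) into the ratio. \textbf{The main obstacle} is exactly this last balancing act: the two arguments must meet strictly below $r$ for \emph{every} value of $\mathrm{maxcut}(S)/w(S)\in[\tfrac12,1]$, and the bound is tight only because the merge inequality forces $\OPT$ to be large precisely in the range where the max-cut oracle is weakest. Getting the constants in the $\nu_k$-recursion and in the case split to line up — and seeing why the factor $4\alpha_{GW}$, rather than $3\alpha_{GW}$, is what comes out — is the delicate part of the proof.
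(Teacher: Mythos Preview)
Your scheme has a real gap in Case~2. With $\alpha=\alpha_{GW}\approx 0.8786$ and $m=\mathrm{maxcut}(S)/w(S)$, your threshold is $t_0=(4\alpha-1)/(4\alpha^{2})\approx 0.814$, so Case~2 must handle all $m\in[\tfrac12,t_0)$. But the crude upper bound $\mathrm{ALG}(S)\le(1+2/\alpha-2m)\,w(S)$ you derive---by unrolling with only the guarantee that each cut removes an $\alpha/2$-fraction of its cluster---is too loose: at $m=0.7$, for instance, it gives $\mathrm{ALG}\le 1.876\,w(S)$, while the best of your $(k+1)-\nu_k m$ lower bounds on $\OPT/w(S)$ is the $k=1$ bound $2-m=1.3$ (the $k=2$ bound $3-2.5m=1.25$ is already worse there), so the ratio is $1.443>r\approx 1.398$. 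Checking the whole interval, the two families of bounds fail to meet below $r$ for roughly $m\in(0.61,0.80)$, so the ``balancing act'' you flag as the delicate step in fact does not close with these ingredients.

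The paper's argument is both different and much shorter; it uses no induction, no merge inequality, and no family of $\OPT$ bounds. It assumes (harmlessly---take the better of the GW cut and a greedy bisection) that every cut also removes at least half the weight of its cluster. Writing $|R_i|$ for the total weight surviving to depth $i$ of the algorithm's tree and $|R_i^*|$ for the same in the optimal tree, this immediately gives $\mathrm{ALG}=\sum_i |R_i|\le |R_0|+2|R_1|$. Against the trivial lower bound $\OPT\ge |R_0|+|R_1^*|$, one then does a single case split on $|R_1^*|/|R_0|$: if $|R_1^*|>\tfrac{2\alpha-1}{2\alpha}|R_0|$ then $|R_1|\le |R_0|/2$ already suffices; otherwise the single inequality $|X_1|\ge\alpha|X_1^*|$ (the algorithm's root cut is at least $\alpha$ times the optimum's root cut) yields $|R_1|\le(1-\alpha)|R_0|+\alpha|R_1^*|$, and a two-line computation finishes. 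The case split is over $|R_1^*|$, a quantity tied to the optimal tree's first cut, rather than over $\mathrm{maxcut}(S)$---this, together with the ``at least half'' assumption you did not exploit, is what makes the constants line up for every $\alpha\le 1$.
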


Considering that the best known approximation algorithm for weighted maximum cut problem has ratio $\alpha_{GW} = 0.8786$ \cite{GW95},
the ratio of this algorithm would be $1.3977$. This also means that any top-down algorithm which cuts at least half of the weight of the remaining edges,
including the random partitioning algorithm, would have approximation ratio $2$.

\section{Maximizing $Rev(T)$ in Dissimilarity-Based Graphs}
Our goal in this section is to prove Theorem \ref{hc:max-dissim}. For ease of exposition, we first present an algorithm
and prove it has approximation $0.6929$. Then we show how running the same algorithm with different
parameters and taking the best solution of all we obtain a 0.71604-approximation.
Our algorithm for maximizing $Rev(T)$ (objective function (\ref{max-dissim})) builds upon the algorithm of Charikar et al. \cite{CCN19} which has approximation ratio
$0.667078$, slightly better than random partition at each step, which has ratio $2/3$.

As explained in \cite{CCN19}, one can see that the top-level cuts of the tree, i.e. those corresponding to clusters closer to the root are making the large portion of the optimum value.
Therefore, it seems reasonable, in a top-down approach, to use larger cut sizes at each step. So, this suggests a simple algorithm: at each step try to find a max-cut (or approximate max-cut). However, this
 “recursive max-cut” fails. As shown in \cite{CCN19}, for a graph of $n$ vertices with a clique of size $\epsilon n$ where the rest of the edges have weight zero, the optimum solution "peels off" vertices of the implanted clique one by one (i.e. in initial steps each vertex of the clique is separated from the rest), and
this obtains an objective value of at least $n(1 - \epsilon)W$, where $W$ is the sum of all edge weights. But the recursive max-cut (even if we find the optimum max-cut in each step) will have ratio 
at most $\frac{2+\epsilon}{3} nW$. 
Thus, this "recursive max-cut" is not going to perform better than the trivial random partitioning at each step.

Inspired by this, \cite{CCN19} suggested an algorithm that initially will peel off vertices with high (weighted) degree one by one
(depending on a predetermined threshold) and after that will use a max-cut in one step to partition the remaining cluster into two.
From there on, we can assume we use the random partitioning. This is the "peel-off first, max-cut next" algorithm of \cite{CCN19}.
Note that the upper bound used so far in previous works for optimum is $nW$. Intuitively, if the optimum value is close to this lower bound
then there must be a good max-cut, and if the optimum is bounded away from this then random partition performs better than 2/3.
They showed that the better of this "peel-off first, max-cut next" algorithm and the random partitioning algorithm has approximation ratio of $0.667078$ 
for maximizing $Rev(T)$ in dissimilarity-based graphs, which slightly beats the $2/3$-approximation of
simply doing random partitioning.

Our algorithm is based on similar ideas \cite{CCN19} but has more steps added into it to improve the bound.
Our basic algorithm takes the better of the two of 
"Random Partitioning" algorithm and our "Peel-Off first, Max-Cut or Random Next" algorithm.
This is to make sure in the cases where the revenue of the optimum solution is too far from $nW$,
the random partitioning algorithm would give us a good enough approximation 
and we can focus on the case where the revenue of the optimum, 
which we denote by $OPT$, is close to $nW$.

A key difference between our "Peel-Off first, Max-Cut or Random Next" algorithm and algorithm
of \cite{CCN19}, which is "Peel-Off first Max-Cut Next" algorithm, 
is that after the Peel Off in the first phase we then choose the better of the two of 
"Random Partitioning" and "Max Cut" algorithms. 
Actually, by looking at the fraction of $W$ which is peeled off in the first phase, 
we can decide which one of the "Random Partitioning" or "Max Cut" would be good enough 
for the second phase and only go ahead with that one.
We have also generalized their analysis by considering a few more parameters to 
finally improve the approximation ratio from $0.667078$ to $0.6929$.

In our algorithm, we first set a parameter $\gamma \geq 1$ and start the Peel-Off process.
More precisely we define $W_v$ for each remaining vertex as 
the current total of the weights of the edges incident to $v$
and remove vertices with $W_v \geq \gamma \frac{2W}{n}$ and all their edges, 
where $W$ is the total of the weight of all the edges (not only the remaining edges) 
and $n$ is the total number of vertices. So, $W_v$ is a dynamic value (this is different from what \cite{CCN19} do as they compare
the initial $W_v$ with the threshold value).
After removing one vertex (which is always the one with the largest $W_v$) 
we remove all the edges incident to it and update $W_v$ for the remaining vertices accordingly.
Note that, we do not update $W$ and $n$ and these two are fixed (to the initial values) throughout the entire peel-off process.
So our Peel-Off process is a bit different than the one used in \cite{CCN19}
as we need to update $W_v$ for the remaining vertices after removing each vertex.

After we reach a state where all the remaining vertices have $W_v < \gamma \frac{2W}{n}$,
we then look at the fraction of the $W$ which is peeled off.
Let us denote all the peeled off vertices by $V_R$ and call them "Red" vertices (this is similar to the terminology used by \cite{CCN19}).
We also denote all the edges incident to at least one red vertex,
those that are removed from the graph after the peel off phase, by red edges, $E_R$, and denote the total weight of all the red edges by $W_R$ and define $R = W_R/W$ as the weighted fraction of the red edges.
We also call the vertices and edges that are not red, as blue vertices and edges, respectively:
$V_B = V \setminus V_R$ and $E_B = E \setminus E_B$.

At this stage of our algorithm if $R$ is greater than a pre-defined parameter $0 < R^* < 1/2$ (to be chosen later)
this means that a good fraction of the vertices are peeled-off 
and we continue the algorithm by doing the random partitioning.
Otherwise, this means that remaining graph is pretty dense, and we can prove there should be a big cut.
So, if $R \leq R^*$ we continue the algorithm by doing an approximated max-cut. After max-cut we do the
rest using random partitioning.
Formally we propose "Peel-off First, Max Cut or Random Next" (Algorithm \ref{hc:alg:peel-off-max-cut-random})
and prove Theorem \ref{hc:thm:DissimMax}.


\begin{algorithm}
	\begin{algorithmic}
		\STATE \textbf{Input:} $G = (V, E)$, dissimilarity weights$\{w_{i,j}\}_{(i,j)\in E}$, and parameters $\gamma \geq 1$ and $0 < R^* < 1/2$
		\STATE define $W = \sum_{(v,u) \in E} w_{v,u}$ and $n = \lvert V \rvert$.
		\STATE Initialize hierarchical clustering tree $T \leftarrow \emptyset$.
		\STATE Initialize $V_B \leftarrow V$ and $E_B \leftarrow E$.
		\WHILE{there exists a vertex $v \in V_B$ with $W_v = \sum_{u \in V_B: (v,u) \in E_B} w_{v,u} \geq \gamma \frac{2W}{n}$}
		\STATE Choose $v^* \in V_B$ with the largest $W_v$.
		\STATE Update $T$ by adding the cut $(\{v^*\}, V_B \setminus \{v^*\})$.
		\STATE Update $V_B$ by removing $v^*$; $V_B \leftarrow V_B \setminus \{v^*\}$.
		\STATE Update $E_B$ by removing all the edges incident to $v^*$; $E_B \leftarrow E_B \setminus \{e \in E_B : e$ incident to $v^*\}$.
		\ENDWHILE
		\STATE define $W_R = W - \sum_{(v,u) \in E_B} w_{v,u}$ and $R = \frac{W_R}{W}$.
		\IF{$R > R^*$}
		\STATE Recursively run Random Partitioning Algorithm on $V_B$ and update $T$.
		\ELSE
		\STATE Run approximate Max-Cut \cite{GW95} on $G_B = (V_B, E_B)$
		\STATE Let the resulting cut be $(V_L, V_R)$ and update $T$ by adding this cut.
		\STATE Recursively run Random Partitioning Algorithm on $V_L$ and $V_R$ and update $T$.
		\ENDIF
		\RETURN $T$
	\end{algorithmic}
	\caption{Peel-off First, Max Cut or Random Next}
	\label{hc:alg:peel-off-max-cut-random}
\end{algorithm}

\begin{theorem}\label{hc:thm:DissimMax}
	For Hierarchical Clustering on dissimilarity-based graphs, 
	there exists a choice of $\gamma \geq 1$ and $0 < R^* < 1/2$ such that 
	the better of Algorithm \ref{hc:alg:peel-off-max-cut-random} and "Random Partitioning" algorithm
	would be an $\alpha$-approximation algorithm to maximize $Rev(T)$, where $\alpha = 0.6929$.
\end{theorem}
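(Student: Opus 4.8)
The plan is to lower-bound the revenue of each of the two candidate algorithms as a function of the parameter $R$ (the weighted fraction of red edges peeled off in the first phase) and of how far $OPT$ is from the trivial upper bound $nW$, and then argue that for a suitable choice of $\gamma$ and $R^*$ the maximum of the two bounds is always at least $\alpha \cdot OPT$ with $\alpha = 0.6929$. Throughout I will use the standard upper bound $OPT \le nW$, and more refined upper bounds obtained by charging: an edge $\{i,j\}$ contributes $w_{i,j}|T_{i,j}|$, so if we write $OPT = (1-\beta) nW$ for some $\beta \ge 0$, then $\beta$ measures the "deficiency" of the optimum, and it will be convenient to split $\beta$ into the contribution coming from red edges versus blue edges.

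First I would analyze the Random Partitioning branch. A clean fact (already used in \cite{CKMM19,CCN19}) is that recursive random partitioning on a set of $m$ vertices has expected revenue $\tfrac{2}{3}$ of $m$ times the total remaining weight, up to lower-order terms; more usefully, each edge $\{i,j\}$ gets expected multiplier $\tfrac{2}{3}(m{+}1)$ or so. The key refinement I would exploit is that the red edges, being incident to the high-degree peeled vertices, are cut \emph{early} — each red vertex $v^*$ with current weighted degree $W_{v^*} \ge \gamma \tfrac{2W}{n}$ is split off when the cluster still has size close to $n$, so the red edges already collect almost the full multiplier $n$ rather than only $\tfrac{2}{3}n$. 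This gives a bound of the form $Rev(\text{PeelThenRandom}) \gtrsim \big(R + \tfrac{2}{3}(1-R)\big) nW$, i.e. the peel-off phase "saves" the factor-$\tfrac23$ loss on an $R$-fraction of the weight. For the pure Random Partitioning algorithm I would simply use $Rev(\text{Random}) \ge \tfrac{2}{3} nW \ge \tfrac{2}{3}OPT$, but sharpened: if $OPT = (1-\beta)nW$ then $Rev(\text{Random}) \ge \tfrac{2}{3}nW = \tfrac{2}{3(1-\beta)}OPT$, which already beats $\alpha$ once $\beta$ is bounded away from $0$.

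Next I would analyze the Max-Cut branch, which only runs when $R \le R^*$. The point of the peel-off stopping rule is that once no vertex has weighted degree $\ge \gamma\tfrac{2W}{n}$, the remaining blue graph $G_B$ is "spread out," and one shows it has a cut of weight at least $\tfrac12 W_B$ (every graph does) — but more importantly, when $R \le R^*$ the optimum cannot be getting too much revenue from blue edges at deep levels, because a near-optimal tree would have to peel blue vertices early, which the low weighted degrees forbid. Concretely I would prove: if $R \le R^*$ then the blue part of $OPT$ is at most roughly $\big(\tfrac{2}{3} + g(\gamma)\big) n W_B$ for an explicit slack $g(\gamma) \to 0$ as $\gamma \to \infty$, so that taking an $\alpha_{GW}$-approximate max-cut in one step (collecting $\approx \alpha_{GW}\cdot \tfrac12 \cdot n W_B$ there, since the cut is near the top) and random partitioning afterwards recovers a $(1+\text{const})$ factor over $\tfrac23$ on the blue weight. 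Combining with the red-edge savings as above yields $Rev(\text{Alg}~\ref{hc:alg:peel-off-max-cut-random}) \ge \big(R + c(1-R)\big)\,OPT$ for some $c > \tfrac23$ when $R \le R^*$.

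Finally I would combine the three regimes: (i) if $\beta$ is large (i.e. $OPT$ far from $nW$), pure Random Partitioning gives $\tfrac{2}{3(1-\beta)}OPT \ge \alpha\,OPT$; (ii) if $\beta$ is small and $R > R^*$, the peel-then-random bound $\big(R + \tfrac23(1-R)\big)nW \ge \big(R^* + \tfrac23(1-R^*)\big)(1-\beta)^{-1}OPT$ beats $\alpha$; (iii) if $\beta$ is small and $R \le R^*$, the max-cut branch beats $\alpha$. Taking the worst case over the two crossover points and optimizing $\gamma$ and $R^*$ numerically pins down $\alpha = 0.6929$. The main obstacle I anticipate is making the red-edge "early cut" savings and, especially, the bound on the blue part of $OPT$ when $R \le R^*$ fully rigorous: one has to track the dynamic weighted degrees $W_v$ through the peel-off (which is why the paper updates them, unlike \cite{CCN19}), account for the vertices that become low-degree only after earlier peels, and control the lower-order $O(1)$ additive terms in all the random-partitioning expectations; these bookkeeping details, rather than any single clever idea, are where the real work lies.
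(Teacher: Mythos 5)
Your overall architecture (three regimes: $\epsilon$ large $\Rightarrow$ pure random partitioning; $\epsilon$ small and $R>R^*$ $\Rightarrow$ peel-then-random; $\epsilon$ small and $R\le R^*$ $\Rightarrow$ peel-then-max-cut) and your treatment of the peel-off and random-partitioning branches match the paper. The gap is in the max-cut branch, and it is not a bookkeeping issue but a missing idea. You propose to (a) lower-bound the cut found by Goemans--Williamson via the trivial fact that every graph has a cut of weight $\tfrac12 W_B$, and (b) upper-bound the blue contribution of $OPT$ by roughly $(\tfrac23+g(\gamma))\,nW_B$. Claim (b) is false: take the blue graph to be a balanced complete bipartite graph with uniform weights small enough that no vertex exceeds the degree threshold; then $OPT$ cuts all of $W_B$ at the root and its blue contribution is essentially $nW_B$. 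And (a) is quantitatively insufficient: the one max-cut step yields only about $\alpha_{GW}\cdot\tfrac12\, nW_B\approx 0.439\,nW_B$ on the blue edges, and even adding the post-cut random-partitioning revenue (which the paper deliberately does not count, because the split sizes and residual weights are uncontrolled) you cannot get back above $\tfrac23\,nW_B$ in the worst case, let alone strictly beat it. So your claimed bound $Rev \ge (R+c(1-R))\,OPT$ with $c>\tfrac23$ in the $R\le R^*$ regime does not follow from what you prove.

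The ingredient you are missing is the structural argument on the optimum tree (the strengthened version of the Charikar et al.\ analysis): descend in $OPT$'s tree to the deepest cluster of size $>n(1-\delta)$, let $L,R$ be its two children, and observe that the blue vertices falling outside $L\cup R$ carry total weight at most $2c\gamma W$ because they survived the peel-off (Lemma~\ref{hc:lm:WBCh}); then the assumption $OPT=(1-\epsilon)nW$ with $\epsilon$ small forces $W_{L,R}$, the blue weight crossing the cut $(L,R)$, to be close to \emph{all} of $W_B$, not half of it (Lemma~\ref{hc:lm:WLR}). It is this near-perfect cut --- whose existence is certified by the near-optimality of $OPT$ together with the low residual degrees --- that the approximate max-cut step captures up to a factor $\alpha_{GW}$, and this is what pushes the blue revenue above $\tfrac23\,nW_B$ and makes the case $R\le R^*$ close (Theorem~\ref{hc:thm:algPC}). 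Without it, the $R\le R^*$ branch of your argument fails, and with it your regimes (i)--(iii) do combine, after the parameter optimization, to give $\alpha=0.6929$.
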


\subsection{Proof of Theorem \ref{hc:thm:DissimMax}}
To prove Theorem \ref{hc:thm:DissimMax}, we first need to have some more definitions.
First recall that in an instance with $n$ vertices and the total weights of dissimilarities of $W$, 
the random partitioning algorithm always has revenue at least $2nW/3$ \cite{CNC18}.

In the cases where $OPT$ is too far from $nW$ the random partitioning algorithm would be good enough.
More precisely we consider $OPT = (1-\epsilon) nW$, and this defines $\epsilon$, which is not known to us.
If $\epsilon \geq \frac{\alpha-2/3}{\alpha}$, then $OPT \leq \frac{2}{3\alpha} nW$ 
and the random partitioning algorithm would be an $\alpha$-approximation. So, we assume $\epsilon < \frac{\alpha-2/3}{\alpha}$ and 
prove that there exist choices of $\gamma$ and $R^*$ which make Algorithm \ref{hc:alg:peel-off-max-cut-random}
an $\alpha$-approximation for $\alpha = 0.6929$. 

\begin{lemma} \label{hc:lm:ell}
Let the number of vertices that are peeled off after the first phase of our algorithm be $\lvert V_R \rvert = n\ell$. Then
	$\ell \leq \frac{R}{2\gamma}$
\end{lemma}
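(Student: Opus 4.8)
The plan is to relate the number of peeled-off (red) vertices to the total weight of red edges by a simple charging argument, using the threshold condition that triggers each peel-off step. Let $v_1, v_2, \dots, v_{n\ell}$ be the vertices peeled off, in the order in which the algorithm removes them (so $v_1$ has the largest initial $W_v$, and so on). When $v_k$ is removed, by the while-loop condition its current weighted degree in the then-current blue graph is $W_{v_k} \ge \gamma \frac{2W}{n}$. The edges counted in this $W_{v_k}$ are exactly the edges incident to $v_k$ that have not yet been deleted, i.e. edges from $v_k$ to vertices still present at step $k$; all of these become red edges (they are incident to the red vertex $v_k$) and, crucially, none of them has been counted as the "removal weight" of any earlier red vertex, since earlier vertices $v_1, \dots, v_{k-1}$ were already gone when we measured $W_{v_k}$.

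The key step is therefore to observe that the quantities $W_{v_1}, W_{v_2}, \dots, W_{v_{n\ell}}$ count pairwise disjoint sets of edges, and every edge counted is red. Summing gives
\[
\sum_{k=1}^{n\ell} W_{v_k} \le W_R,
\]
because the left-hand side is the total weight of a subcollection of the red edges, each counted once. Combining this with the lower bound $W_{v_k} \ge \gamma \frac{2W}{n}$ for every $k$ yields
\[
n\ell \cdot \gamma \frac{2W}{n} \le \sum_{k=1}^{n\ell} W_{v_k} \le W_R = R W,
\]
and dividing through by $2\gamma W$ gives $\ell \le \frac{R}{2\gamma}$, as claimed.

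The only subtlety — and the point that needs to be stated carefully rather than the point that is hard — is the disjointness claim: one must be sure that when edge $(v_j, v_k)$ connects two red vertices with $j < k$, it is not double-counted. This is handled by the dynamic updating of $W_v$: the edge $(v_j, v_k)$ is deleted from $E_B$ at the moment $v_j$ is removed, so it is no longer present when $W_{v_k}$ is evaluated, hence it contributes to $W_{v_j}$ only. So each red edge contributes to at most one term $W_{v_k}$ (to the term of whichever of its endpoints is peeled first, or to neither if both endpoints survive the peel-off but the edge is red because... in fact a red edge has at least one red endpoint, and it is charged to the first such endpoint to be peeled). This confirms $\sum_k W_{v_k} \le W_R$ and completes the proof. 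I do not expect any real obstacle here; the lemma is essentially a bookkeeping consequence of the peel-off threshold, and the main care is in the ordering/disjointness argument just described.
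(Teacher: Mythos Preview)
Your proof is correct and follows the same approach as the paper: both bound $W_R$ from below by summing the dynamic degrees $W_{v_k}$ of the peeled vertices, each of which is at least $\gamma\frac{2W}{n}$. In fact the disjointness you carefully justify shows that $\sum_k W_{v_k}$ equals $W_R$ exactly (every red edge is charged to its first-peeled red endpoint), which the paper asserts in one line; your version simply spells out the bookkeeping more explicitly.
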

\begin{proof}
Recall that only those vertices 
that have $W_v \geq \gamma \frac{2W}{n}$ at the time of the peel-off are peeled off in the first phase.
So, we can say that $W_R$, which is the summation of $W_v$ of all the peeled off vertices,
is at least $n\ell \cdot \gamma \frac{2W}{n}$.
This means $W_R \geq 2\gamma \ell W$.
\end{proof}

Now assume $ALG_P$ is the contribution of all the red edges (of our algorithm) to the final objective revenue $Rev(T)$.
This is the revenue which is obtained in the peel-off phase. The following lemma is a stronger version of Lemma 5.1 of \cite{CCN19}.

\begin{lemma} \label{hc:lm:algP}
	$ALG_P \geq (1-\ell/2) nW_R$
\end{lemma}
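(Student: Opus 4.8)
The plan is to bound the revenue $ALG_P$ collected during the peel-off phase by tracking, red edge by red edge, how large a cluster that edge is separated in. Recall that in the peel-off phase we repeatedly remove the current highest-weighted-degree vertex as a singleton; when we peel off a vertex $v^*$ from a current blue set $B$, the cut $(\{v^*\}, B\setminus\{v^*\})$ is added, and for every edge $(v^*,u)$ with $u\in B$ the least common ancestor of $v^*$ and $u$ is this node, whose subtree has $|B|$ leaves. So the contribution of that edge to $Rev(T)$ is exactly $w_{v^*,u}\cdot |B|$. Every red edge is charged exactly once in this way — at the moment its first (in peeling order) endpoint is peeled off — so $ALG_P = \sum$ over red edges $e=(v^*,u)$ of $w_e \cdot |B_e|$, where $B_e$ is the blue set at the time the earlier endpoint of $e$ was removed.

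The key observation is that $|B_e|$ is large for every red edge: if $|V_R| = n\ell$, then at the time any red vertex is peeled off, at most $n\ell - 1$ red vertices have been removed so far, so the current blue set has size $|B_e| \ge n - (n\ell - 1) = n - n\ell + 1 > n(1-\ell)$. This already gives $ALG_P > (1-\ell)nW_R$, which is weaker than what we want by a factor related to $\ell/2$ versus $\ell$. To get the stronger bound $ALG_P \ge (1-\ell/2)nW_R$ we need to exploit that the red edges peeled off late — when $|B_e|$ is only $\approx n(1-\ell)$ — are relatively few, while red edges peeled off early enjoy $|B_e|\approx n$. Concretely, order the peeled vertices $v_1, v_2, \ldots, v_{n\ell}$ in peeling order, let $W^{(k)}$ be the weight of edges from $v_k$ to the blue set at the time $v_k$ is removed (so $\sum_k W^{(k)} = W_R$), and note $|B_{e}| \ge n - k + 1$ for every edge first peeled at step $k$. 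Thus $ALG_P \ge \sum_{k=1}^{n\ell} W^{(k)}(n-k+1)$. The worst case for this sum, subject to $\sum_k W^{(k)} = W_R$ and each $W^{(k)} \ge \gamma\frac{2W}{n}$ (the peel-off threshold, which also forces $W^{(k)}$ to be not too small), is to push weight toward large $k$; but the threshold constraint and the fact that there are exactly $n\ell$ terms mean the weights are spread over the whole range $k=1,\dots,n\ell$, and averaging $n-k+1$ over this range gives roughly $n - n\ell/2 = n(1-\ell/2)$. Making this precise — showing $\sum_k W^{(k)}(n-k+1) \ge (1-\ell/2)n\sum_k W^{(k)}$ — is a short convexity/rearrangement argument using that the coefficients $(n-k+1)$ decrease linearly in $k$ and that there are exactly $n\ell$ of them, so that even placing all the weight as unfavorably as possible the coefficient never drops below its value at $k = n\ell$, and the linear average over all allowed placements is $n(1-\ell/2) + O(1)$.

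The main obstacle I anticipate is handling the constraint correctly: it is not simply that $ALG_P \ge \sum_k W^{(k)}(n-k+1)$ with arbitrary $W^{(k)}$, because if the $W^{(k)}$ could be concentrated entirely on the last vertex we would only get $(1-\ell)nW_R$. The argument must use that the number of peeled vertices is \emph{exactly} $n\ell$ (so weight cannot be concentrated past index $n\ell$) together with the fact that each of these $n\ell$ vertices carries weight at least the threshold $\gamma\frac{2W}{n}$ at removal time; combining "there are $n\ell$ of them" with "each coefficient is at least $n - n\ell + 1$" and "the coefficients average to about $n(1-\ell/2)$" is what yields the factor $1-\ell/2$ rather than $1-\ell$. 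A clean way to package this is to write $n - k + 1 \ge n - n\ell + (n\ell - k + 1)$ and sum the two pieces separately: the first piece contributes $(1-\ell)nW_R$ plus a bit, and the second piece, since $\sum_{k=1}^{n\ell}(n\ell-k+1)$ weighted against the (threshold-lower-bounded, hence fairly uniform) $W^{(k)}$, contributes an additional $\approx \frac{\ell}{2}nW_R$. I would then confirm that the lower-order $+1$ terms only help, so the stated inequality $ALG_P \ge (1-\ell/2)nW_R$ holds exactly as claimed, and note that this is indeed a strengthening of Lemma 5.1 of \cite{CCN19} because their peel-off uses a static threshold whereas our dynamic $W_v$ guarantees the per-step weight bound is maintained throughout.
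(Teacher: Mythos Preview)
Your setup is correct: writing $ALG_P \ge \sum_{k=1}^{n\ell} W^{(k)}(n-k+1)$ with $\sum_k W^{(k)}=W_R$ is exactly the right starting point. The gap is in how you pass from this sum to $(1-\ell/2)\,nW_R$. You propose to use only the threshold constraint $W^{(k)}\ge \gamma\frac{2W}{n}$ to argue the weights are ``fairly uniform'' across $k=1,\dots,n\ell$. That is not enough. The threshold gives each $W^{(k)}\ge t:=\gamma\frac{2W}{n}$, but the total threshold mass $n\ell\cdot t=2\gamma\ell W$ can be an arbitrarily small fraction of $W_R$ (Lemma~\ref{hc:lm:ell} only says $n\ell\cdot t\le W_R$, not $\ge$). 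Subject only to $\sum_k W^{(k)}=W_R$ and $W^{(k)}\ge t$, the minimizer of $\sum_k W^{(k)}(n-k+1)$ puts all the excess $W_R-(n\ell-1)t$ on $k=n\ell$, and a short computation shows the resulting lower bound is only
\[
(n-n\ell+1)W_R+\tfrac{n\ell-1}{2}\,(n\ell\, t),
\]
which is about $(1-\ell)nW_R$ when $n\ell\,t\ll W_R$. So the ``spread via threshold'' intuition does not deliver the extra $\tfrac{\ell}{2}nW_R$ you need.

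What you are missing is the one structural fact your write-up never invokes: the algorithm always peels the vertex with the \emph{largest} current $W_v$, and since removing a vertex can only decrease the remaining $W_v$'s, this forces
\[
W^{(1)}\ge W^{(2)}\ge\cdots\ge W^{(n\ell)}.
\]
Now both the weights $W^{(k)}$ and the coefficients $(n-k+1)$ are non-increasing in $k$, so by Chebyshev's sum inequality
\[
\sum_{k=1}^{n\ell} W^{(k)}(n-k+1)\;\ge\;\frac{1}{n\ell}\Big(\sum_{k=1}^{n\ell}(n-k+1)\Big)\Big(\sum_{k=1}^{n\ell}W^{(k)}\Big)
=\Big(n-\tfrac{n\ell}{2}+\tfrac12\Big)W_R\;\ge\;(1-\ell/2)\,nW_R.
\]
This is exactly the paper's one-line argument; the threshold plays no role in this lemma (it is used only in Lemma~\ref{hc:lm:ell}).
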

\begin{proof}
Suppose $V_R = \{v_1, v_2, ..., v_{n\ell}\}$ where $v_i$ is the $i$'th vertex 
which is peeled off.
Recall that at each step of the peel-off phase of the algorithm we choose the vertex with the largest $W_v$.
Thus $W_{v_1} \geq W_{v_2} \geq ... \geq W_{v_{n\ell}} \geq \gamma \frac{2W}{n}$.
Now consider the revenue which is obtained in the peel-off phase. 
While removing $v_i$ the size of the tree is $n-i+1$, so we have:
\vspace{-3mm}
\begin{eqnarray*}
	ALG_P &=& n W_{v_1} + (n-1) W_{v_2} + ... + (n - n\ell + 1) W_{v_{n\ell}}\\
	&\geq& \frac{\big(\sum_{i = n-n\ell+1}^n i \big) \cdot \big(\sum_{j=1}^{n\ell} W_{v_j}\big)}{n\ell}\\
	&=& \big(\frac{n\ell \cdot (n + (n-n\ell+1))/2}{n\ell} \big) \cdot W_R\\ 
	&=& (1-\ell/2 + 1/n) \cdot nW_R \\
	&\geq& (1-\ell/2) nW_R.
\end{eqnarray*}

\end{proof}

As we mentioned earlier, after the peel-off phase, the algorithm will look at the ratio $R=W_R/W$ and compares it
with the given parameter $R^*$. If $R>R^*$, the second phase will be to do the random partitioning and no max-cut is needed.
Otherwise, if $R \leq R^*$, then the algorithm will use Goemans and Williamson's algorithm 
for Max-Cut \cite{GW95} on the remaining graph (blue vertices and edges)
to do one partition (the rest can be done in any arbitrary manner, 
say random partition).
So, we separate our analysis in those two cases and prove necessary lemmas and theorems in each case.

\subsubsection{Case 1: $R > R^*$}
Let us denote the total revenue obtained in the random partitioning phase of the algorithm by $ALG_R$.
Then the total revenue of our algorithm would be at least $ALG_P + ALG_R$.
For this case of $R>R^*$ we prove the following theorem:

\begin{theorem}\label{hc:thm:algPR}
	If $R > R^*$:
		$ALG_P + ALG_R \geq nW \cdot (\frac{2}{3} + \frac{\gamma-1}{3\gamma} \cdot R^* + \frac{1}{12\gamma} \cdot {R^*}^2).$
\end{theorem}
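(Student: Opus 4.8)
The plan is to lower-bound the revenue of Algorithm~\ref{hc:alg:peel-off-max-cut-random} in the case $R > R^*$ as the sum $ALG_P + ALG_R$, handling the peel-off contribution and the random-partitioning contribution separately and then combining them as a function of the (unknown) quantities $\ell = |V_R|/n$ and $R = W_R/W$, before finally optimizing/monotonicity-checking over the feasible range of these quantities. From Lemma~\ref{hc:lm:algP} we already have $ALG_P \geq (1-\ell/2)\,nW_R = (1-\ell/2)\,RnW$. For the random-partitioning phase, the key observation is that after the peel-off we run random partitioning on the blue graph $G_B = (V_B, E_B)$, which has $n(1-\ell)$ vertices and total blue weight $W - W_R = (1-R)W$. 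The standard fact (cited from \cite{CNC18} at the start of the subsection) is that random partitioning on an instance with $m$ vertices and total weight $W'$ yields revenue at least $\frac{2}{3} m W'$; but here we must be careful that every blue pair $(i,j)$ has LCA inside the blue subtree whose root is a cluster of size $n(1-\ell)$, sitting at depth $n\ell+1$ in the overall tree, so the ``$|T_{i,j}|$'' counted in $Rev(T)$ is \emph{not} relative to $n(1-\ell)$ but to the full $n$. I would therefore argue that for each blue edge $(i,j)$, $|T_{i,j}|$ equals its size within the blue sub-instance (where random partitioning gives expected size $\geq \frac{2}{3}$ of the blue cluster size... — actually one gets, per edge, that the expected LCA-subtree size inside the blue instance is at least $\frac{2}{3} n(1-\ell) + \text{(lower-order)}$, hence $ALG_R \geq \frac{2}{3} n(1-\ell)(1-R)W$ up to the usual $+1$ slack which only helps).

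Combining, $ALG_P + ALG_R \geq nW\left[(1-\tfrac{\ell}{2})R + \tfrac{2}{3}(1-\ell)(1-R)\right]$. The next step is to eliminate $\ell$ using Lemma~\ref{hc:lm:ell}, which gives $\ell \leq \frac{R}{2\gamma}$. Since the bracketed expression is decreasing in $\ell$ (the coefficient of $\ell$ is $-\tfrac{R}{2} - \tfrac{2}{3}(1-R)$, which is negative for $R \in (0,1)$), we may substitute the worst case $\ell = \frac{R}{2\gamma}$ to obtain a clean lower bound purely in terms of $R$ and $\gamma$:
\[
ALG_P + ALG_R \;\geq\; nW\left[\Bigl(1 - \tfrac{R}{4\gamma}\Bigr)R + \tfrac{2}{3}\Bigl(1 - \tfrac{R}{2\gamma}\Bigr)(1-R)\right].
\]
Expanding, the bracket becomes $\tfrac{2}{3} + \tfrac{R}{3} - \tfrac{R^2}{4\gamma} - \tfrac{R}{3\gamma} + \tfrac{R^2}{3\gamma} = \tfrac{2}{3} + \tfrac{\gamma-1}{3\gamma}R + \tfrac{1}{12\gamma}R^2$, which is exactly the claimed quantity but with $R$ in place of $R^*$.

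The final step is the passage from $R$ to $R^*$: we are in the branch $R > R^*$, and I claim the bracketed function $g(R) = \tfrac{2}{3} + \tfrac{\gamma-1}{3\gamma}R + \tfrac{1}{12\gamma}R^2$ is non-decreasing in $R$ on $[0,1]$ (indeed $\gamma \geq 1$ makes both the linear and quadratic coefficients non-negative), so $g(R) \geq g(R^*)$ and the theorem follows. The main thing to be careful about — and what I'd flag as the one genuinely delicate point rather than routine algebra — is the bookkeeping in the $ALG_R$ bound: making sure that the revenue from blue edges is measured against subtree sizes in the \emph{full} tree $T$ and confirming that the depth offset from the peeled-off vertices contributes non-negatively (so that dropping it is a valid lower bound), and likewise that the ``$2/3$'' guarantee of random partitioning is applied to the correct sub-instance size $n(1-\ell)$ and that combining it with the full-tree accounting does not accidentally double count or undercount the $+1/n$-type lower-order terms. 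Everything else is monotonicity of an explicit low-degree polynomial, which I would not grind through in detail.
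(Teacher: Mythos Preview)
Your proof is correct and follows essentially the same route as the paper: you invoke Lemma~\ref{hc:lm:algP} for $ALG_P$, establish $ALG_R \geq \tfrac{2}{3}(1-\ell)n(W-W_R)$ (which the paper packages as Lemma~\ref{hc:lm:algR}), combine and substitute $\ell \leq R/(2\gamma)$ via monotonicity in $\ell$, simplify to $\tfrac{2}{3} + \tfrac{\gamma-1}{3\gamma}R + \tfrac{1}{12\gamma}R^2$, and finish by monotonicity in $R$ using $\gamma \geq 1$. Your extra caution about the $ALG_R$ bookkeeping is harmless but unnecessary: for blue pairs the LCA lies in the blue subtree, so $|T_{i,j}|$ is literally the same whether measured in $T$ or in the blue sub-instance, and there is no ``depth offset'' to worry about.
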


This means that the ratio of our algorithm in this case is at least
$(\frac{2}{3} + \frac{\gamma-1}{3\gamma} \cdot R^* + \frac{1}{12\gamma} \cdot {R^*}^2)$.
Note that to have this ratio, we are actually comparing our revenue with $nW$
(and not $OPT$ which might in fact be smaller)
and if we somehow find a better upper-bound for $OPT$, 
comparing our revenue with $OPT$ might prove a better ratio for our algorithm.
To prove this Theorem, we first recall that in an instance with 
$n$ vertices and the total weights of dissimilarities of $W$, 
random partitioning algorithm has revenue at least $2nW/3$ \cite{CNC18}.

\begin{lemma} \label{hc:lm:algR}
	$ALG_R \geq \frac{2}{3}(1-\ell) n (W - W_R)$.
\end{lemma}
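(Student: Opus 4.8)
The plan is to bound from below the revenue $ALG_R$ collected during the random-partitioning phase that runs on the blue graph $G_B = (V_B, E_B)$. First I would recall the basic fact, cited just above the lemma, that random partitioning on an instance with $m$ vertices and total edge weight $W'$ achieves revenue at least $\frac{2}{3} m W'$; this is the engine of the argument. Here the instance on which we run random partitioning is $G_B$, whose vertex set $V_B$ has size $n - n\ell = n(1-\ell)$ and whose edge weight is exactly $W - W_R$ by definition of $W_R$ (all removed edges are red). Plugging $m = n(1-\ell)$ and $W' = W - W_R$ into the basic bound would give revenue at least $\frac{2}{3} n(1-\ell)(W - W_R)$ \emph{measured inside the subtree rooted at the blue cluster}, i.e. with multipliers counting only blue leaves.

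The one subtlety is that $ALG_R$, as a contribution to the global objective $Rev(T)$, uses the multiplier $|T_{i,j}|$ computed in the \emph{full} tree $T$, not in the subtree on $V_B$. But for any pair $i,j$ of blue vertices, their least common ancestor in $T$ lies inside the blue subtree (the red peels are all strictly above it), so $|T_{i,j}|$ equals the size of their LCA's subtree \emph{within $G_B$}, which is at most $n(1-\ell)$ but is exactly what the $\frac{2}{3} m W'$ bound already accounts for — so there is no loss, and in fact the global multiplier can only be $\geq$ the within-blue multiplier when extra red leaves hang below (which they do not here). Hence the within-$G_B$ lower bound transfers verbatim to $ALG_R$, and we obtain $ALG_R \geq \frac{2}{3} n(1-\ell)(W - W_R)$ as claimed.

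I expect the only thing needing care is this bookkeeping of which multiplier is used — making precise that running random partitioning as a black box on $G_B$ and then reading off its revenue \emph{inside} $G_B$ is a valid lower bound on the revenue those same pairs contribute to the global tree. This is immediate once one observes that no red vertex is ever a descendant of the blue cluster, so the subtree sizes seen by blue pairs are the same whether computed locally or globally. Everything else is a direct substitution into the known $2/3$ guarantee for random partitioning, so the proof should be short.
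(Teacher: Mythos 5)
Your proposal is correct and follows essentially the same route as the paper: both simply instantiate the known $\frac{2}{3}mW'$ guarantee for random partitioning on the blue subgraph with $m = n(1-\ell)$ vertices and total weight $W - W_R$. Your extra remark that the subtree sizes seen by blue pairs are the same locally and globally (since no red vertex descends from the blue cluster) is a valid bookkeeping point that the paper leaves implicit.
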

\begin{proof}
This follows easily from the observation mentioned above since after the peel-off phase the number of remaining vertices is
$\lvert V_B \rvert = n - n\ell=(1-\ell)n$ and 
the total weights of the remaining edges is $W_B = W - W_R$.
\end{proof}

The rest of the proof of Theorem \ref{hc:thm:algPR} will be done 
by simply using Lemmas \ref{hc:lm:ell}, \ref{hc:lm:algP}, \ref{hc:lm:algR}, and the fact that $R > R^*$.
Combining Lemmas \ref{hc:lm:algP} and \ref{hc:lm:algR} we have:
\vspace{-3mm}
\begin{eqnarray*}
	ALG_P + ALG_R &\geq& (1-\ell/2) nW_R + \frac{2}{3}(1-\ell) n (W - W_R) \\
	\Rightarrow \frac{ALG_P + ALG_R}{nW} &\geq& (1-\ell/2) R + \frac{2}{3}(1-\ell) (1-R).
\end{eqnarray*}

The RHS is decreasing with $\ell$ increasing, so using Lemma \ref{hc:lm:ell}, we have:

\begin{equation*}
	\Rightarrow \frac{ALG_P + ALG_R}{nW} \geq (1-\frac{R}{4\gamma}) R + \frac{2}{3}(1-\frac{R}{2\gamma}) (1-R) = (\frac{2}{3} + \frac{\gamma-1}{3\gamma} \cdot R + \frac{1}{12\gamma} \cdot R^2).
\end{equation*}

This bound is increasing with $R$, so considering the fact that $R > R^*$, it implies Theorem \ref{hc:thm:algPR}.

Let us define function $F(R) = \left(1-\frac{R}{4\gamma}\right) R + \frac{2}{3}\left(1-\frac{R}{2\gamma}\right) (1-R)$ 
and note that in Theorem \ref{hc:thm:algPR} we just proved that if $R > R^*$ then $ALG_P + ALG_R \geq nW\cdot F(R^*)$. We will use this bound later.

\subsubsection{Case 2: $R \leq R^*$}
Like the other case, let us denote the total revenue obtained in the max-cut phase of the algorithm by $ALG_C$.
Then the total revenue of our algorithm would be at least $ALG_P + ALG_C$. Note that we are not considering the revenue obtained after the
max-cut phase, which uses random partitioning (as we do not have a good bound on the value/weight of the edges left after performing the max-cut).
For this case of $R\leq R^*$ we prove the following theorem:

\begin{theorem}\label{hc:thm:algPC}
Suppose $R \leq R^*$ and the following conditions for $\epsilon$, $R^*$, and $\gamma$ hold (where $\alpha_{GW}$ is the ratio of
approximation for max-cut \cite{GW95}):
\vspace{-3mm}
\begin{eqnarray}
R^* &<& 1 - \frac{6\alpha_{GW}}{3\alpha_{GW}-2} \cdot \epsilon.	 \label{hc:cond:Re}\\
\gamma &\leq& (1-R^*)^2 \cdot \frac{3\alpha_{GW}-2}{9\alpha_{GW}^2} \cdot \frac{1}{\epsilon}. \label{hc:cond:Reg}\\
\frac{1}{3} &\geq& R^* \cdot \left(\frac{\alpha_{GW}}{2}-\frac{1}{6}\right) + \gamma \cdot (1-\alpha_{GW}) .\label{hc:cond:Rg}
\end{eqnarray} 
Then $ALG_P + ALG_C \geq nW \cdot F(R^*).$
\end{theorem}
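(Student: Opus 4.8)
The plan is to lower bound the two contributions $ALG_P$ and $ALG_C$ separately and add them; the real content is a lower bound on the weight of a maximum cut of the blue graph $G_B$. For $ALG_P$, Lemma~\ref{hc:lm:algP} gives $ALG_P\ge(1-\ell/2)nW_R$, and since $1-\ell/2$ is decreasing in $\ell$, Lemma~\ref{hc:lm:ell} (which gives $\ell\le R/(2\gamma)$) yields $ALG_P\ge\bigl(1-\tfrac{R}{4\gamma}\bigr)nW_R=\bigl(1-\tfrac{R}{4\gamma}\bigr)R\cdot nW$, which is exactly the first summand of $F(R)$.

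For $ALG_C$, recall that on the branch $R\le R^*$ the algorithm runs Goemans--Williamson~\cite{GW95} on $G_B=(V_B,E_B)$, a cluster of size $|V_B|=(1-\ell)n$; every edge it cuts sits at the root of this cluster, so $ALG_C\ge(1-\ell)n\cdot\alpha_{GW}\cdot\mathrm{MaxCut}(G_B)$. The crux is therefore a lower bound on $\mathrm{MaxCut}(G_B)$, and here I would sharpen the corresponding step of~\cite{CCN19}. First transfer the hypothesis $OPT=(1-\epsilon)nW$ to $G_B$: restricting the optimal tree $T^*$ to the leaves $V_B$ loses at most $nW_R$ from the deleted red edges (each contributing at most $n\cdot w_{i,j}$) and at most $n\ell$ per surviving subtree size (there are only $n\ell$ red leaves), so $\mathrm{Rev}_{G_B}(T^*|_{V_B})\ge OPT-nW_R-n\ell W_B=nW\bigl[(1-\ell)(1-R)-\epsilon\bigr]$, i.e.\ the deficiency $|V_B|\,W_B-\mathrm{Rev}_{G_B}(T^*|_{V_B})$ is at most $\epsilon nW$. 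Now examine the top of $T^*|_{V_B}$: its internal nodes whose cluster has more than $(1-\delta)|V_B|$ items form a path $u_0,\dots,u_k$ from the root (two clusters exceeding half of $|V_B|$ cannot be disjoint once $\delta<1/2$), and a Markov-type estimate against the deficiency bound shows that the blue weight cut strictly below this path is at most $\epsilon W/(\delta(1-\ell))$. Of the blue weight cut \emph{on} the path, everything except the split at $u_k$ is incident to the fewer than $\delta|V_B|$ items hanging off $u_0,\dots,u_{k-1}$, and since after the peel-off every blue vertex has weighted degree below $\gamma\cdot\frac{2W}{n}$, that part has weight at most $2\gamma\delta(1-\ell)W$. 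Consequently the single split at $u_k$ --- an honest $2$-cut of $G_B$ --- already has weight at least $W_B-\epsilon W/(\delta(1-\ell))-2\gamma\delta(1-\ell)W$, and choosing the threshold $\delta$ to balance these terms gives $\mathrm{MaxCut}(G_B)\ge(1-R)W-\mathrm{err}(\epsilon,\gamma)\,W$ with $\mathrm{err}$ small, hence $ALG_C\ge\alpha_{GW}\,nW\,(1-\ell)\bigl[(1-R)-\mathrm{err}\bigr]$.

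Putting the two bounds together, using $\ell\le R/(2\gamma)$ to pass from $(1-\ell)(1-R)$ to $(1-\tfrac{R}{2\gamma})(1-R)$, and writing $\alpha_{GW}(1-\tfrac{R}{2\gamma})(1-R)=\tfrac23(1-\tfrac{R}{2\gamma})(1-R)+(\alpha_{GW}-\tfrac23)(1-\tfrac{R}{2\gamma})(1-R)$ so that $(1-\tfrac{R}{4\gamma})R+\tfrac23(1-\tfrac{R}{2\gamma})(1-R)=F(R)$ is recognised, the claim $ALG_P+ALG_C\ge nW\,F(R^*)$ reduces to showing, for every $R\in[0,R^*]$,
\[
\bigl(\alpha_{GW}-\tfrac23\bigr)\bigl(1-\tfrac{R}{2\gamma}\bigr)(1-R)\ \ge\ \bigl(F(R^*)-F(R)\bigr)+\alpha_{GW}\cdot\mathrm{err}(\epsilon,\gamma).
\]
The left side is convex and the right side concave (up to a constant), so the difference is a convex quadratic in $R$ on $[0,R^*]$ and it suffices to check it at $R=0$, at $R=R^*$, and at its minimiser. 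Condition~(\ref{hc:cond:Rg}) is exactly what keeps the gap $F(R^*)-F(R)$ (a quadratic whose coefficients depend on $\gamma$) small enough to be absorbed at $R=R^*$; condition~(\ref{hc:cond:Re}) keeps $1-R$, hence the whole left side, bounded away from zero by the needed multiple of $\epsilon$ throughout $[0,R^*]$; and condition~(\ref{hc:cond:Reg}) --- the place where $(1-R^*)^2$ and $\tfrac{3\alpha_{GW}-2}{9\alpha_{GW}^2}$ enter --- controls the $\gamma$-dependent part of $\mathrm{err}$ coming from the choice of $\delta$. The remaining verification is a routine one-variable computation.

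The one genuinely delicate step is the lower bound on $\mathrm{MaxCut}(G_B)$. A naive two-colouring of the top-part partition only captures half of its crossing weight, which is far too lossy here; avoiding the factor $1/2$ forces one to exploit simultaneously that the heavy part of $T^*|_{V_B}$ is a path of very unbalanced splits and that $G_B$ is low-degree, so that essentially one split of $G_B$ carries all of the heavy weight. Extracting a clean error term from this and calibrating it precisely against conditions~(\ref{hc:cond:Re})--(\ref{hc:cond:Rg}) is where the work lies; the peel-off bound and the final algebra are bookkeeping.
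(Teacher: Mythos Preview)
Your overall architecture matches the paper's: bound $ALG_P$ via Lemmas~\ref{hc:lm:ell} and~\ref{hc:lm:algP}, bound $ALG_C$ by $(1-\ell)n\cdot\alpha_{GW}$ times a max-cut lower bound extracted from the heavy path of the optimal tree, and then combine. The genuine gap is in the max-cut lower bound itself.

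You bound the chain weight (blue weight incident to the off-path vertices above $u_k$) by $2\gamma\delta(1-\ell)W$, using only that there are fewer than $\delta|V_B|$ such vertices. Balancing this against the deep-weight term $\epsilon W/(\delta(1-\ell))$ gives $\mathrm{err}=2\sqrt{2\gamma\epsilon}$. But plug this into your own combined inequality at $R=R^*$: since $F(R^*)-F(R^*)=0$, you need $(\alpha_{GW}-\tfrac23)(1-R^*)\ge\alpha_{GW}\cdot\mathrm{err}$, i.e.\ $\mathrm{err}\le\frac{3\alpha_{GW}-2}{3\alpha_{GW}}(1-R^*)$, which rearranges to $\gamma\le(1-R^*)^2\,\frac{(3\alpha_{GW}-2)^2}{72\alpha_{GW}^2}\cdot\frac{1}{\epsilon}$. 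This is stronger than Condition~(\ref{hc:cond:Reg}) by the factor $\frac{3\alpha_{GW}-2}{8}\approx0.08$; under the theorem's stated hypothesis (and the paper applies it with (\ref{hc:cond:Reg}) at equality) your error term is about $3.5$ times too large, and the inequality already fails at $R=R^*$. No choice of $\delta$ rescues this, since $2\sqrt{2\gamma\epsilon}$ is the minimum of your two-term error over $\delta$.

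The paper avoids this loss by never replacing the actual off-path count by the threshold. It works with the full optimal tree, lets $cn$ denote the \emph{actual} number of vertices above the split (so $0\le c<\delta$), and obtains (Lemma~\ref{hc:lm:WLR}) a bound $W_{L,R}\ge\frac{\delta W_B-(\epsilon+2c\delta\gamma)W}{\delta-c}$ that still depends on $c$. The crucial step is Lemma~\ref{hc:lm:c}: Condition~(\ref{hc:cond:Reg}) is exactly the hypothesis that makes this expression increasing in $c$, so the worst case is $c=0$, where the chain term vanishes entirely and the only error left is $\epsilon/\delta$. The paper then chooses $\delta=\frac{3\alpha_{GW}}{3\alpha_{GW}-2}\cdot\frac{\epsilon}{1-R^*}$ not by balancing two errors but so that $\tilde F(R^*)=F(R^*)$ on the nose; Condition~(\ref{hc:cond:Re}) is what guarantees this $\delta<\tfrac12$, and Condition~(\ref{hc:cond:Rg}) is what makes $\tilde F$ decreasing on $[0,R^*]$ (Lemma~\ref{hc:lm:c2}). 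Your assignment of roles to the three conditions is accordingly off: (\ref{hc:cond:Reg}) is not controlling the size of a $\gamma$-dependent error, it is the monotonicity-in-$c$ condition that makes that error disappear altogether.
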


This means that the ratio of our algorithm in this case (like the other case) is at least $F(R^*)$
if conditions (\ref{hc:cond:Re}), (\ref{hc:cond:Reg}), and (\ref{hc:cond:Rg}) are met.

We continue by looking at the layered structure of the optimum tree similar to the one done in \cite{CCN19}.
Fix parameter $\delta < 1/2$ (to be specified later) and imagine we start from the root of the optimum tree.
\noindent\parbox[b][][s]{0.5\linewidth}{%
We are looking for the smallest (i.e. deepest) cluster of the tree with size more than $n(1-\delta)$. This is a cluster
where itself and all its ancestors have size more than $n(1-\delta)$ but both its children are smaller.
Consider each time the optimum tree performs a partition of a cluster into two.
Since $\delta < 1/2$, either both of these clusters must be of size at most $n(1-\delta)$ (at which point we stop),
or exactly one of them is smaller than $n(1-\delta)$ and the other is strictly larger than $n(1-\delta)$.
In the latter case, we go down the bigger branch and keep doing this until both children have size at most $n(1-\delta)$. At this point we have found the smallest 
}
\hfill
    \includegraphics[scale=.55]{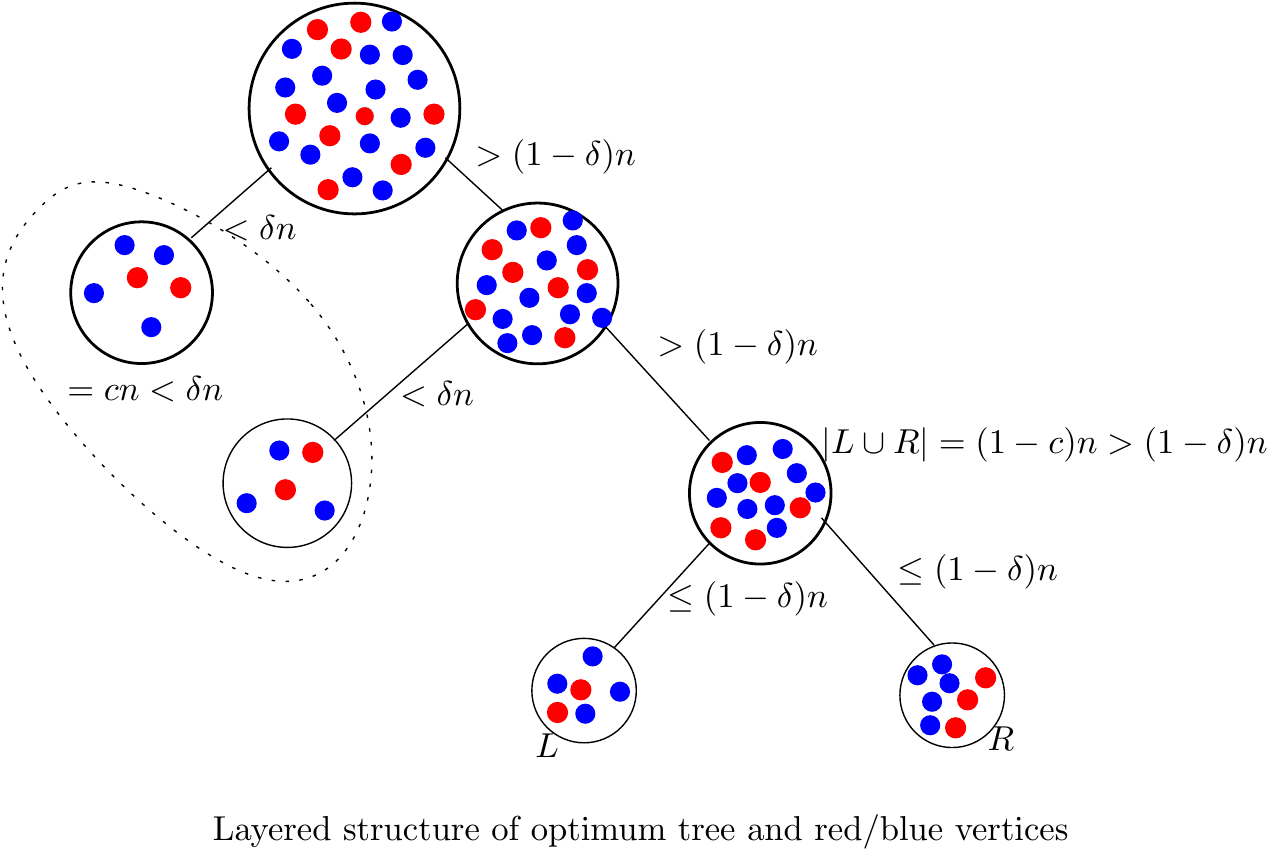}
    
\noindent    
cluster (internal node) of the tree with size more than $n(1-\delta)$,
which then has two (children) clusters of size at most $n(1-\delta)$.
We denote these two children clusters by $L$ and $R$ and we have:
$\lvert L \rvert < n(1-\delta)$, $\lvert R \rvert < n(1-\delta)$,
$\lvert L \cup R \rvert > n(1-\delta)$, and finally $\lvert V \setminus (L \cup R) \rvert < n\delta$.
We define $c = \lvert V \setminus (L \cup R) \rvert / n$ and we know $0 \leq c < \delta$.

Now we partition the blue vertices (vertices that survived the peel off phase) into two groups.
$V_{B-Cut} = V_B \cap (L \cup R)$ and $V_{B-Chain} = V_B \setminus (L \cup R)$.
Note that $V_{B-Cut}$ are those inside the smallest cluster (internal node) 
of the optimum tree with size more than $n(1-\delta)$ and $V_{B-Chain}$ are those outside it.
We also define $W_{B-Chain}$ as the total weights of the blue edges incident to at least one vertex of $V_{B-Chain}$.
Now $W_B$ is partitioned into four part: $W_B = W_L + W_R + W_{L,R} + W_{B-Chain}$, 
where $W_{L,R}$ are those with one end in $L$ and the other in $R$ 
while $W_L$ and $W_R$ are those with both ends in $L$ and $R$, respectively.
Now we are ready for our first lemma.

\begin{lemma} \label{hc:lm:WBCh}
	$W_{B-Chain} \leq 2c\gamma W$.
\end{lemma}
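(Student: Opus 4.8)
The plan is to bound $W_{B-Chain}$ by observing that every blue edge counted in $W_{B-Chain}$ has at least one endpoint in $V_{B-Chain} \subseteq V \setminus (L \cup R)$, a set of size at most $cn < \delta n$. So I would start by writing $W_{B-Chain} \le \sum_{v \in V_{B-Chain}} W_v$, where $W_v$ here denotes the final blue-degree of $v$ (the sum of weights of blue edges incident to $v$); this inequality is just the fact that each such edge is incident to at least one — possibly two — vertices of $V_{B-Chain}$, so summing the endpoint-degrees over $V_{B-Chain}$ counts each edge at least once.

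The next step is to use the stopping condition of the peel-off phase. Every vertex $v$ that survived into $V_B$ has, at the moment the while-loop terminated, $W_v < \gamma \tfrac{2W}{n}$ — and since $W_v$ only decreases as more vertices (and their incident edges) are removed, the final blue-degree of every blue vertex is also $< \gamma\tfrac{2W}{n}$. Therefore $\sum_{v \in V_{B-Chain}} W_v < |V_{B-Chain}| \cdot \gamma\tfrac{2W}{n} \le cn \cdot \gamma\tfrac{2W}{n} = 2c\gamma W$, which is exactly the claimed bound. Combining the two inequalities gives $W_{B-Chain} \le 2c\gamma W$.

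I do not expect any real obstacle here; the only point requiring a moment of care is the direction of the $W_v$ inequality — one must note that we should compare against the \emph{terminal} value of $W_v$ (which is the relevant blue-degree), and that terminal value is dominated by the threshold $\gamma\tfrac{2W}{n}$ precisely because the loop exited, and is in fact no larger than the value at loop-exit since subsequent edge removals (from the max-cut or random partitioning phases) never increase it. A secondary subtlety is that the naive double-counting gives a factor-$2$ slack when both endpoints lie in $V_{B-Chain}$, but since we only need an upper bound this slack is harmless and the stated inequality follows directly.
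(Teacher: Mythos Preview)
Your proof is correct and follows essentially the same approach as the paper: bound $W_{B\text{-}Chain}$ by $\sum_{v \in V_{B\text{-}Chain}} W_v$, use the peel-off stopping condition $W_v < \gamma\tfrac{2W}{n}$ for every surviving vertex, and multiply by $|V_{B\text{-}Chain}| \le cn$. Your extra remarks about monotonicity of $W_v$ and the double-counting slack are harmless elaborations; the aside about ``subsequent edge removals from the max-cut or random partitioning phases'' is unnecessary, since the relevant $W_v$ is the blue-degree at the moment the while-loop exits, which is already below the threshold.
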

\begin{proof}
We know that $\lvert V_{B-Chain} \rvert \leq \lvert V \setminus (L \cup R) \rvert = cn$.
We also know that each $v \in V_{B-Chain}$ has $W_v < \gamma \frac{2W}{n}$:
\vspace{-3mm}
\begin{equation*}
	W_{B-Chain} \leq \sum_{v \in V_{B-Chain}} W_v < \lvert V_B \rvert \cdot \gamma \frac{2W}{n} = 2c\gamma W.
\end{equation*}\vspace{-1.5mm}
\end{proof}

Now it iss time to find a cut in $V_B$ with provably good size.
This layered structure of the optimum tree would suggest the cut $(L, R)$ 
with size $W_{L,R}$ and the following lemma will give us a lower-bound on its size:

\begin{lemma} \label{hc:lm:WLR}
	$W_{L,R} \geq \frac{\delta W_B - (\epsilon + 2c\delta\gamma) W}{\delta-c}$.
\end{lemma}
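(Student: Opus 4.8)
The goal is to lower-bound the weight $W_{L,R}$ of blue edges crossing the cut $(L,R)$. The natural handle is the optimum tree $T^*$: since $L$ and $R$ are the two children of the smallest cluster $L\cup R$ of size $>n(1-\delta)$, every edge $e=(i,j)$ with one endpoint in $L$ and the other in $R$ has $|T^*_{i,j}|=|L\cup R|>n(1-\delta)$, so such edges contribute a lot to $OPT=(1-\epsilon)nW$; conversely, every other edge contributes at most $n$ times its weight, and the edges inside $L$, inside $R$, or touching the "chain" part contribute strictly less. First I would split $W$ (or $W_B$ together with the red weight $W_R$) into the pieces $W_L, W_R, W_{L,R}, W_{B\text{-}Chain}$ and the red part $W_R$, and write down the trivial upper bound for $OPT$ obtained by assigning to each edge its maximum possible LCA-subtree size.

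The key quantitative step is that edges sitting strictly \emph{inside} $L$ (resp.\ inside $R$) can have $|T^*_{i,j}|$ at most $|L|<n(1-\delta)$ (resp.\ $|R|<n(1-\delta)$), so they "lose" at least $n\delta$ relative to the crude bound $n\cdot w_{i,j}$. Bounding $OPT$ from above by $nW$ minus the savings forced on the inside-$L$ and inside-$R$ edges, and using $OPT=(1-\epsilon)nW$, gives
\[
(1-\epsilon)nW \;\le\; nW \;-\; n\delta\,(W_L+W_R)\,,
\]
roughly speaking, so $W_L+W_R \le \frac{\epsilon}{\delta}W$. Then, since $W_B = W_L+W_R+W_{L,R}+W_{B\text{-}Chain}$, we get $W_{L,R}\ge W_B - \frac{\epsilon}{\delta}W - W_{B\text{-}Chain}$, and Lemma \ref{hc:lm:WBCh} replaces $W_{B\text{-}Chain}$ by $2c\gamma W$. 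This is close to the claimed bound but off by the denominator $\delta-c$; to get the factor $\frac{\delta}{\delta-c}$ one has to be more careful about the chain edges: the vertices $V\setminus(L\cup R)$ lie on a chain of clusters all of size $>n(1-\delta)$, so chain edges also contribute close to the maximum and should be grouped with the "good" edges rather than thrown away. Accounting that the quantity $n(1-\delta)$ is really an underestimate — the relevant subtree sizes degrade from $n$ only by about $n\delta$ for the $c$-fraction of chain vertices — and redoing the averaging over the $(\delta-c)$-sized "loss budget" yields the stated $\frac{\delta W_B-(\epsilon+2c\delta\gamma)W}{\delta-c}$.

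I would therefore structure the proof as: (1) classify each edge of $G$ by where its endpoints sit relative to $L$, $R$, and the chain, and record the exact cap on $|T^*_{i,j}|$ for each class; (2) write $OPT \le \sum_e w_e \cdot(\text{cap for }e)$ and simplify using $W=W_L+W_R+W_{L,R}+W_{B\text{-}Chain}+W_R$ (red weight) and $|L|,|R|<n(1-\delta)$, $|L\cup R|\le n$; (3) substitute $OPT=(1-\epsilon)nW$ and solve the resulting linear inequality for $W_{L,R}$; (4) substitute Lemma \ref{hc:lm:WBCh} and clean up to match the statement. The main obstacle is step (2)–(3): getting the denominator $\delta-c$ rather than $\delta$ requires treating the chain edges as near-full-contribution edges (they lose at most $\sim n\delta$ each, not everything), so the bookkeeping must distinguish "edges forced to be small" (inside $L$ or inside $R$, losing $\ge n\delta$) from "edges that are essentially free" (crossing $L$–$R$ or touching the chain), and the $\epsilon nW$ deficit of $OPT$ must be charged entirely against the first group's $(\delta-c)$-weighted loss budget. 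Everything else is routine algebra.
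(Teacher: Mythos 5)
Your overall strategy is the same as the paper's: classify edges by where their endpoints sit relative to $L$, $R$, and the chain; upper-bound $OPT$ by $\sum_e w_e$ times a cap on $|T^*_{i,j}|$ for each class; substitute $OPT=(1-\epsilon)nW$; solve the linear inequality for $W_{L,R}$ and plug in Lemma \ref{hc:lm:WBCh}. But your diagnosis of where the denominator $\delta-c$ comes from is wrong, and the fix you propose would not produce it. The chain edges play no refined role: the paper gives them (and the red edges) the trivial cap $n$ and throws them away. The $\delta-c$ comes from the \emph{crossing} edges: for $i\in L$, $j\in R$ the least common ancestor in the optimum tree is exactly the cluster $L\cup R$, so $|T^*_{i,j}|=|L\cup R|=n(1-c)$, i.e.\ each crossing edge loses $nc$ relative to the trivial cap rather than being ``essentially free.'' This gives $\epsilon W\ge \delta(W_L+W_R)+c\,W_{L,R}$, and substituting $W_L+W_R=W_B-W_{B\text{-}Chain}-W_{L,R}$ turns the coefficient of $W_{L,R}$ into $\delta-c$, which is exactly the lemma after applying Lemma \ref{hc:lm:WBCh}.

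As written, your final bookkeeping (crossing edges capped at $n$, the $\epsilon nW$ deficit charged to a ``$(\delta-c)$-weighted loss budget'' of the inside-$L$ and inside-$R$ edges) only recovers $W_{L,R}\ge W_B-W_{B\text{-}Chain}-\tfrac{\epsilon}{\delta}W$, which is not the claimed inequality. The plan is salvageable through your own step (1) --- recording the \emph{exact} cap for each class would force the value $n(1-c)$ on the crossing edges --- but the key quantitative step is currently misattributed, so the argument as described does not close.
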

\begin{proof}
Consider the revenue of the optimum solution. By looking at the layered structure we just introduced we have:

\begin{equation*}
	OPT \leq n (W_R + W_{B-Chain}) + n(1-\delta) (W_L + W_R) + n(1-c) W_{L,R}.
\end{equation*}

This is because the size of the cluster when the optimum solution produces cut $(L, R)$ is exactly $n(1-c)$ 
and after that the size of the following clusters could not be more than $n(1-\delta)$.
Now we use $OPT = (1-\epsilon) nW$ and $W_L + W_R = W_B - W_{B-Chain} - W_{L,R}$ in the above inequality:
\vspace{-1.5mm}
\begin{equation*}
	OPT = (1-\epsilon) nW \leq n (W_R + W_{B-Chain}) + n(1-\delta) (W_B - W_{B-Chain} - W_{L,R}) + n(1-c) W_{L,R}.
\end{equation*}

By dividing both sides by $n$ and considering the fact that $W = W_R + W_B$ we have:

\begin{equation*}
	\Rightarrow W - \epsilon W \leq W - \delta (W_B - W_{B-Chain} - W_{L,R}) - c W_{L,R}.
\end{equation*}
\begin{equation*}
	\Rightarrow \delta(W_B - W_{B-Chain}) - \epsilon W \leq (\delta-c) W_{L,R}.
\end{equation*}

Considering the fact that $c$ is strictly less than $\delta$ and using Lemma \ref{hc:lm:WBCh} this completes the proof of the lemma.
\end{proof}

The next lemma will provide a lower bound on $ALG_C$.

\begin{lemma} \label{hc:lm:ALGc1}
	$ALG_C \geq \alpha_{GW} \cdot (1-\ell) n W_{L,R}$.
\end{lemma}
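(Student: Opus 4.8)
The plan is to combine an exact accounting of the revenue from the single max-cut step with the Goemans--Williamson guarantee and a lower bound on the maximum cut of $G_B$ coming from the optimum tree's layered structure.

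First I would describe the structure of $T$ at the moment the max-cut is performed. Each iteration of the peel-off loop removes one vertex $v^*$ and inserts the cut $(\{v^*\}, V_B\setminus\{v^*\})$ into $T$, so after all $n\ell$ removals the surviving set $V_B$ is an internal node of $T$ of size $\lvert V_B\rvert = (1-\ell)n$, and the algorithm then makes it the parent of the two clusters $V_L, V_R$ produced by the approximate max-cut on $G_B = (V_B, E_B)$. For any pair $i\in V_L$, $j\in V_R$ with $(i,j)\in E_B$, the least common ancestor of $i$ and $j$ in $T$ is exactly this node $V_B$ (all deeper subdivisions keep $i$ and $j$ apart, all shallower ancestors contain both), so the pair contributes $w_{i,j}\cdot (1-\ell)n$ to $Rev(T)$. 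Summing over the crossing blue edges gives the exact identity $ALG_C = (1-\ell)n\cdot w(V_L, V_R)$, where $w(V_L,V_R)$ is the total weight of blue edges across the cut; note that this is disjoint from the red-edge contribution $ALG_P$.

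Second I would lower bound $w(V_L, V_R)$. The Goemans--Williamson algorithm guarantees $w(V_L, V_R) \ge \alpha_{GW}\cdot \mathrm{MaxCut}(G_B)$, so it suffices to show $\mathrm{MaxCut}(G_B)\ge W_{L,R}$. For this I exhibit one explicit cut of $G_B$: place $L\cap V_B$ on one side and $(R\cap V_B)\cup V_{B-Chain}$ on the other (the vertices of $V_{B-Chain}$ may be assigned either way). Every edge counted in $W_{L,R}$ is blue and has one endpoint in $L$ and one in $R$; being blue, both endpoints lie in $V_B$, hence one lies in $L\cap V_B$ and the other in $R\cap V_B$, so the edge crosses this cut. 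Therefore this cut of $G_B$ has weight at least $W_{L,R}$, and $\mathrm{MaxCut}(G_B)\ge W_{L,R}$. Chaining the identity and the two inequalities yields $ALG_C = (1-\ell)n\cdot w(V_L,V_R) \ge \alpha_{GW}(1-\ell)n\, W_{L,R}$, as claimed.

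The argument is essentially bookkeeping, and I do not expect a genuine obstacle. The only points needing a little care are the identification of the least common ancestor of a crossing pair with the node $V_B$ of size exactly $(1-\ell)n$ (so that the single cut's revenue is counted correctly and does not overlap with $ALG_P$), and the observation that $W_{L,R}$, being by definition part of $W_B$, consists only of edges internal to $G_B$ and is thus realized as a cut inside $G_B$.
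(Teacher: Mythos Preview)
Your proof is correct and follows essentially the same approach as the paper's. The paper's argument is a terse two-sentence sketch that asserts there is a cut of size at least $W_{L,R}$ in $G_B$ and then multiplies by $\alpha_{GW}$ and the cluster size $(1-\ell)n$; you simply make explicit the two points the paper leaves implicit, namely the identification of the LCA with the node $V_B$ (hence the exact identity $ALG_C = (1-\ell)n\cdot w(V_L,V_R)$) and the explicit witness cut $(L\cap V_B,\ (R\cap V_B)\cup V_{B-Chain})$ certifying $\mathrm{MaxCut}(G_B)\ge W_{L,R}$.
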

\begin{proof}
Note that after the peel-off phase, we are going to use 
Goemans and Williamson's max-cut algorithm to find a cut in $V_B$ whose size is $n-n\ell$.
As we know there is a cut with size at least $W_{L,R}$, we can find one with size $\alpha_{GW} W_{L,R}$ 
and because there are $(1-\ell)n$ many blue vertices left after the first phase,
the revenue we will have on the first step of the second phase will be at least $\alpha_{GW} \cdot (1-\ell) n W_{L,R}$.
\end{proof}

Note that if we can somehow find a lower-bound for the revenue which is obtained after this max-cut step,
(perhaps using a bi-section instead of a max-cut)
our ratio could be even better. We should point out that using a bi-section \cite{ACELMMY20} obtained
the improved ratio for the similarity-based version.

Considering the fact that $R = W_R/W$, $W = W_R + W_B$ and combining the previous two lemmas we have:

\begin{equation}
ALG_C \geq \alpha_{GW} \cdot (1-\ell) n W \left(\frac{\delta (1-R) - (\epsilon + 2c\delta\gamma))}{\delta-c}\right).\label{hc:lm:ALGc2}
\end{equation}

Now we want to set $\delta = \frac{3\alpha_{GW}}{3\alpha_{GW}-2} \cdot \frac{\epsilon}{1-R^*}$, and to be able to do that we need Condition (\ref{hc:cond:Re}).
That is because we have to make sure $\delta < 1/2$ and Condition (\ref{hc:cond:Re}) will be enough to have that:

\begin{equation*}
	R^* < 1 - \frac{6\alpha_{GW}}{3\alpha_{GW}-2} \cdot \epsilon \quad \Rightarrow \quad \frac{6\alpha_{GW}}{3\alpha_{GW}-2} \cdot \epsilon < 1 - R^* \quad \Rightarrow \quad \frac{3\alpha_{GW}}{3\alpha_{GW}-2} \cdot \frac{\epsilon}{1-R^*} < 1/2
\end{equation*}

Now we apply $\delta = \frac{3\alpha_{GW}}{3\alpha_{GW}-2} \cdot \frac{\epsilon}{1-R^*}$ to Equation (\ref{hc:lm:ALGc2}):

\begin{corollary} \label{hc:lm:ALGc3}
	$ALG_C \geq \alpha_{GW} \cdot (1-\ell) n W\cdot G(c)$ where
	$G(c) = \left(\frac{(1-R) - (1-R^*)\frac{3\alpha_{GW}-2}{3\alpha_{GW}} - 2c\gamma)}{1-c \frac{(1-R^*)}{\epsilon} \frac{3\alpha_{GW}-2}{3\alpha_{GW}}}\right).$
\end{corollary}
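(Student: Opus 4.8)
The plan is to obtain the Corollary by plugging the chosen value of $\delta$ into Equation~(\ref{hc:lm:ALGc2}) and simplifying. First I would record that Equation~(\ref{hc:lm:ALGc2}) states
\[
ALG_C \geq \alpha_{GW} \cdot (1-\ell) n W \left(\frac{\delta(1-R) - (\epsilon + 2c\delta\gamma)}{\delta - c}\right),
\]
and that it is legitimate to substitute $\delta = \frac{3\alpha_{GW}}{3\alpha_{GW}-2} \cdot \frac{\epsilon}{1-R^*}$ here, because the only structural requirement behind Lemmas~\ref{hc:lm:WBCh} and~\ref{hc:lm:WLR} (and hence behind Equation~(\ref{hc:lm:ALGc2})) was $\delta < 1/2$, which Condition~(\ref{hc:cond:Re}) guarantees through the chain of implications displayed just before the Corollary.

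Next comes the algebraic simplification. Note that the choice of $\delta$ is equivalent to $\frac{\epsilon}{\delta} = \frac{3\alpha_{GW}-2}{3\alpha_{GW}}(1-R^*)$. Using this, I would factor $\delta$ out of both the numerator and the denominator of the bracketed fraction: the numerator becomes $\delta\left[(1-R) - \frac{\epsilon}{\delta} - 2c\gamma\right] = \delta\left[(1-R) - (1-R^*)\frac{3\alpha_{GW}-2}{3\alpha_{GW}} - 2c\gamma\right]$, and the denominator becomes $\delta - c = \delta\left[1 - \frac{c}{\delta}\right] = \delta\left[1 - c\,\frac{1-R^*}{\epsilon}\,\frac{3\alpha_{GW}-2}{3\alpha_{GW}}\right]$. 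Cancelling the common positive factor $\delta$ leaves exactly $G(c)$, which yields the claimed bound.

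There is essentially no obstacle here — this is a bookkeeping step rather than a genuine argument — but I would be careful about one point: the cancellation of $\delta$ and the very shape of $G(c)$ are only meaningful because $0 \le c < \delta$ (established when $L$ and $R$ were defined), so that the denominator $\delta - c$ is strictly positive and no inequality direction reverses. I would also flag that this positivity of the denominator of $G(c)$, together with the magnitude of its numerator, is precisely what Conditions~(\ref{hc:cond:Re})–(\ref{hc:cond:Rg}) will be invoked to control when $G(c)$ is later bounded from below; the Corollary itself, however, needs only the substitution and cancellation described above.
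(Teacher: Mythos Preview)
Your proposal is correct and matches the paper's approach exactly: the paper simply states that one applies $\delta = \frac{3\alpha_{GW}}{3\alpha_{GW}-2}\cdot\frac{\epsilon}{1-R^*}$ to Equation~(\ref{hc:lm:ALGc2}), and your write-up fills in the straightforward factor-and-cancel algebra (using $\epsilon/\delta = (1-R^*)\frac{3\alpha_{GW}-2}{3\alpha_{GW}}$) that the paper leaves implicit. Your remark that $0\le c<\delta$ keeps the denominator positive so no inequality reverses is a useful sanity check the paper does not spell out.
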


Note that $0 \leq c < \delta = \frac{3\alpha_{GW}}{3\alpha_{GW}-2} \cdot \frac{\epsilon}{1-R^*}$,
and the following lemma would make sure the worst case for us is when $c = 0$ (see Appendix \ref{app:proofs} for proof).

\begin{lemma} \label{hc:lm:c}
	If Condition (\ref{hc:cond:Reg}) holds, then $G(0) \leq G(c)$ for all $0 \leq c < \delta = \frac{3\alpha_{GW}}{3\alpha_{GW}-2} \cdot \frac{\epsilon}{1-R^*}$.
\end{lemma}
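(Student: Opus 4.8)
The plan is to show that the function $G(c)$ defined in Corollary \ref{hc:lm:ALGc3} is non-decreasing in $c$ on the interval $[0,\delta)$, so that its minimum over this interval is attained at $c=0$. Write $G(c) = N(c)/D(c)$ where $N(c) = (1-R) - (1-R^*)\frac{3\alpha_{GW}-2}{3\alpha_{GW}} - 2c\gamma$ is an affine function of $c$ with negative slope $-2\gamma$, and $D(c) = 1 - c\,\frac{1-R^*}{\epsilon}\,\frac{3\alpha_{GW}-2}{3\alpha_{GW}}$ is an affine function with negative slope $-\frac{1-R^*}{\epsilon}\,\frac{3\alpha_{GW}-2}{3\alpha_{GW}} = -1/\delta$. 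Since $c<\delta$ we have $D(c)>0$, so $G$ is differentiable there, and the sign of $G'(c)$ equals the sign of $N'(c)D(c) - N(c)D'(c)$. Because both $N$ and $D$ are affine, this quantity is itself affine in $c$; in fact, expanding, the $c$-dependent terms cancel and it reduces to a constant, namely $N'(0)D(0) - N(0)D'(0) = N'(0) - N(0)D'(0)$ (using $D(0)=1$). Hence it suffices to check that this single constant is non-negative.

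So the second step is to verify the inequality
\[
-2\gamma \;+\; \left[(1-R) - (1-R^*)\tfrac{3\alpha_{GW}-2}{3\alpha_{GW}}\right]\cdot \tfrac{1}{\delta} \;\geq\; 0 ,
\]
equivalently $N(0)/\delta \geq 2\gamma$, i.e. $N(0) \geq 2\gamma\delta$. Since $R \le R^*$ (we are in Case 2), the worst case for $N(0)$ is $R = R^*$, giving $N(0) \geq (1-R^*) - (1-R^*)\frac{3\alpha_{GW}-2}{3\alpha_{GW}} = (1-R^*)\cdot\frac{2}{3\alpha_{GW}}$. Substituting $\delta = \frac{3\alpha_{GW}}{3\alpha_{GW}-2}\cdot\frac{\epsilon}{1-R^*}$, the required inequality $N(0) \geq 2\gamma\delta$ becomes
\[
(1-R^*)\cdot\frac{2}{3\alpha_{GW}} \;\geq\; 2\gamma\cdot \frac{3\alpha_{GW}}{3\alpha_{GW}-2}\cdot\frac{\epsilon}{1-R^*},
\]
which rearranges exactly to $\gamma \leq (1-R^*)^2\cdot\frac{3\alpha_{GW}-2}{9\alpha_{GW}^2}\cdot\frac{1}{\epsilon}$ — this is precisely Condition (\ref{hc:cond:Reg}). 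Thus Condition (\ref{hc:cond:Reg}) guarantees $G'(c)\ge 0$ on $[0,\delta)$, and therefore $G(0)\le G(c)$ for all such $c$, as claimed.

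The only mild subtlety — and the step I would be most careful about — is confirming that $D(c)>0$ throughout $[0,\delta)$ so that dividing by $D(c)$ and reading off the sign of $G'$ from the numerator is legitimate; this follows because $D(c) = 1 - c/\delta$ and $c<\delta$. One should also note that I used $R\le R^*$ only to reduce to the worst case $R=R^*$ in the final numeric check; for $R<R^*$ the value $N(0)$ is strictly larger, so the inequality $N(0)\ge 2\gamma\delta$ holds a fortiori, and monotonicity of $G$ is in fact unconditional once $N(0)\ge 2\gamma\delta$. No deeper obstacle arises: the whole argument is the elementary observation that a ratio of two affine functions is monotone on any interval avoiding the pole, together with the bookkeeping that the threshold for the right direction of monotonicity is exactly Condition (\ref{hc:cond:Reg}).
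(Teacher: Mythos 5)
Your proposal is correct and follows essentially the same route as the paper: both arguments observe that $G$ is a ratio of affine functions of $c$, so the sign of $G'$ is governed by the constant $N'(0)D(0)-N(0)D'(0)$, and both reduce the required non-negativity to the bound $X \ge (1-R^*)\frac{2}{3\alpha_{GW}}$ (from $R\le R^*$) combined with Condition (\ref{hc:cond:Reg}). Your write-up is in fact slightly more careful than the paper's, which states the monotonicity criterion with a typo (``$XY\ge Z$'' instead of $XZ\ge Y$) and leaves the positivity of the denominator on $[0,\delta)$ implicit.
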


Using this lemma and applying $c=0$ to Corollary \ref{hc:lm:ALGc3} we have:

\begin{equation}
ALG_C \geq \alpha_{GW} \cdot (1-\ell) n W \left((1-R) - (1-R^*)\frac{3\alpha_{GW}-2}{3\alpha_{GW}}\right).\label{hc:lm:ALGc4}
\end{equation}

Using this and Lemmas \ref{hc:lm:ell} and \ref{hc:lm:algP} we can conclude:

\begin{equation}
\frac{ALG_P + ALG_C}{nW} \geq \left(1-\frac{R}{4\gamma}\right) R + \alpha_{GW} \cdot \left(1-\frac{R}{2\gamma}\right) \left((1-R) - (1-R^*)\frac{3\alpha_{GW}-2}{3\alpha_{GW}}\right). \label{hc:lm:ALGpc}
\end{equation}

Let us define function $\tilde{F}(R) = \left(1-\frac{R}{4\gamma}\right) R + \alpha_{GW} \cdot \left(1-\frac{R}{2\gamma}\right) \left((1-R) - (1-R^*)\frac{3\alpha_{GW}-2}{3\alpha_{GW}}\right)$. Note that the equation above says $\frac{ALG_P + ALG_C}{nW} \geq \tilde{F}(R)$ and:

\begin{eqnarray*}
	\tilde{F}(R^*) &=& \left(1-\frac{R^*}{4\gamma}\right) R^* + \alpha_{GW} \left(1-\frac{R^*}{2\gamma}\right) \left((1-R^*) - (1-R^*)\frac{3\alpha_{GW}-2}{3\alpha_{GW}}\right) \\
	&=& \left(1-\frac{R^*}{4\gamma}\right) R^* + \alpha_{GW} \left(1-\frac{R^*}{2\gamma}\right) (1-R^*) \left(1- \frac{3\alpha_{GW}-2}{3\alpha_{GW}}\right)\\
	&=& \left(1-\frac{R^*}{4\gamma}\right) R^* + \alpha_{GW} \left(1-\frac{R^*}{2\gamma}\right) (1-R^*) \frac{2}{3\alpha_{GW}}\\
	&=& \left(1-\frac{R^*}{4\gamma}\right) R^* + \frac{2}{3}\left(1-\frac{R^*}{2\gamma}\right) (1-R^*)\\
	&=& F(R^*)
\end{eqnarray*}

So, the only thing we need to show to complete the proof of Theorem \ref{hc:thm:algPC} is that
$\tilde{F}(R)$ is decreasing with $R$ 
increasing in interval $0 \leq R \leq R^*$ and the following lemma will prove this (see Appendix \ref{app:proofs}):
\begin{lemma} \label{hc:lm:c2}
	If Condition (\ref{hc:cond:Rg}) holds, then $\tilde{F}(R^*) \leq \tilde{F}(R)$ for all $0 \leq R \leq R^*$.
\end{lemma}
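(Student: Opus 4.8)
The plan is to treat $\tilde{F}$ as a quadratic in $R$ and show that on the interval $[0,R^*]$ its derivative is nonnegative, so that $\tilde{F}$ is nondecreasing there and hence minimized at $R=R^*$ on that interval — wait, that would give $\tilde{F}(R^*)\le \tilde{F}(R)$ for $R\le R^*$ only if $\tilde{F}$ is \emph{decreasing}. So more precisely, I would show that $\tilde{F}'(R)\le 0$ throughout $[0,R^*]$, which makes $\tilde F$ nonincreasing on $[0,R^*]$ and thus $\tilde F(R)\ge \tilde F(R^*)$ for all $R\le R^*$, exactly as claimed. Expanding, one has (with $a:=\frac{3\alpha_{GW}-2}{3\alpha_{GW}}(1-R^*)$, a constant) $\tilde F(R) = R - \frac{R^2}{4\gamma} + \alpha_{GW}\bigl(1-\frac{R}{2\gamma}\bigr)\bigl(1-R-a\bigr)$, a genuine quadratic in $R$ with leading coefficient $-\frac{1}{4\gamma}+\frac{\alpha_{GW}}{2\gamma} = \frac{2\alpha_{GW}-1}{4\gamma}>0$ (since $\alpha_{GW}=0.8786>1/2$). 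So $\tilde F$ is convex, and it suffices to check $\tilde F'(R^*)\le 0$: convexity then forces $\tilde F'(R)\le \tilde F'(R^*)\le 0$ for all $R\le R^*$, giving the monotonicity we want on the whole interval.

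The concrete steps are: (i) differentiate, obtaining $\tilde F'(R) = 1 - \frac{R}{2\gamma} - \frac{\alpha_{GW}}{2\gamma}(1-R-a) + \alpha_{GW}\bigl(1-\frac{R}{2\gamma}\bigr)(-1)$; (ii) evaluate at $R=R^*$, using that at $R=R^*$ we have $1-R^*-a = (1-R^*)\bigl(1-\frac{3\alpha_{GW}-2}{3\alpha_{GW}}\bigr) = (1-R^*)\frac{2}{3\alpha_{GW}}$, which is precisely the simplification that made $\tilde F(R^*)=F(R^*)$; (iii) substitute this to get $\tilde F'(R^*) = 1 - \frac{R^*}{2\gamma} - \frac{\alpha_{GW}}{2\gamma}\cdot(1-R^*)\frac{2}{3\alpha_{GW}} - \alpha_{GW} + \frac{\alpha_{GW}R^*}{2\gamma} = 1-\alpha_{GW} - \frac{R^*}{2\gamma} - \frac{1-R^*}{3\gamma} + \frac{\alpha_{GW}R^*}{2\gamma}$; (iv) multiply through by $\gamma>0$ and collect terms: $\gamma\tilde F'(R^*) = \gamma(1-\alpha_{GW}) - \frac{1}{3} + R^*\bigl(\frac{\alpha_{GW}}{2} - \frac12 + \frac13\bigr) + \frac13 R^*\cdot\text{(correction)}$; carefully, $-\frac{R^*}{2\gamma}-\frac{1-R^*}{3\gamma}+\frac{\alpha_{GW}R^*}{2\gamma} = \frac{1}{\gamma}\bigl(R^*(\frac{\alpha_{GW}}{2}-\frac12+\frac13) - \frac13\bigr) = \frac1\gamma\bigl(R^*(\frac{\alpha_{GW}}{2}-\frac16)-\frac13\bigr)$, so $\gamma\tilde F'(R^*) = \gamma(1-\alpha_{GW}) + R^*\bigl(\frac{\alpha_{GW}}{2}-\frac16\bigr) - \frac13$; (v) observe this is $\le 0$ exactly when $\frac13 \ge R^*\bigl(\frac{\alpha_{GW}}{2}-\frac16\bigr) + \gamma(1-\alpha_{GW})$, which is Condition (\ref{hc:cond:Rg}). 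This closes the argument.

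The main obstacle — such as it is — is purely bookkeeping: making sure the algebraic simplification at $R=R^*$ is done with the correct constant $a$ and that the sign of the leading coefficient of the quadratic is established so that the single endpoint check $\tilde F'(R^*)\le 0$ genuinely propagates to the whole interval via convexity. One should double-check that convexity is the right direction here: since $\tilde F$ is convex, $\tilde F'$ is increasing, so $R\le R^*$ indeed implies $\tilde F'(R)\le \tilde F'(R^*)\le 0$, hence $\tilde F$ nonincreasing on $[0,R^*]$ and $\tilde F(R^*)$ is the minimum there — which is the desired inequality $\tilde F(R^*)\le \tilde F(R)$. No delicate estimates are needed beyond this; the lemma is essentially the statement that Condition (\ref{hc:cond:Rg}) was \emph{reverse-engineered} to make exactly this derivative-at-$R^*$ sign condition hold.
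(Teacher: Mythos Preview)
Your proposal is correct and follows essentially the same approach as the paper: both arguments observe that $\tilde F$ is a convex quadratic in $R$ (leading coefficient $\frac{2\alpha_{GW}-1}{4\gamma}>0$) and reduce the monotonicity on $[0,R^*]$ to a single check at $R=R^*$, which unwinds algebraically to Condition~(\ref{hc:cond:Rg}). The only cosmetic difference is that the paper phrases the check as $R^*\le \frac{Y}{2Z}$ (the vertex location) whereas you phrase it as $\tilde F'(R^*)\le 0$; for a convex quadratic these are identical statements.
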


It is easy to see that Equation (\ref{hc:lm:ALGpc}) together with Lemma \ref{hc:lm:c2} imply Theorem \ref{hc:thm:algPC}.

\subsubsection{Using Theorems \ref{hc:thm:algPR} and \ref{hc:thm:algPC} to prove Theorem \ref{hc:thm:DissimMax}}

To complete the proof of Theorem \ref{hc:thm:DissimMax},
we use Theorems \ref{hc:thm:algPR} and \ref{hc:thm:algPC} to conclude that
if we can set $\gamma$ and $R^*$ such that Conditions (\ref{hc:cond:Re}),
(\ref{hc:cond:Reg}), and (\ref{hc:cond:Rg}) are met,
then the ratio of our algorithm would be at least $F(R^*)$:

\begin{equation}\label{eqn:FR}
	F(R^*) = \left(1-\frac{R^*}{4\gamma}\right) R^*+\frac{2}{3}\left(1-\frac{R^*}{2\gamma}\right) (1-R^*).
\end{equation}

Recall that we can assume $\epsilon<\frac{\alpha-2/3}{\alpha}$ as otherwise the random partitioning gives a
better than $\alpha$-approximation. 
If we fix a value for $\alpha$, this condition implies a bound for $\epsilon$, which then 
using Condition (\ref{hc:cond:Reg}) gives a bound for $\gamma$:
$\gamma = (1-R^*)^2 \cdot \frac{3\alpha_{GW}-2}{9\alpha_{GW}^2} \cdot \frac{\alpha}{\alpha-2/3}$
and this in turn implies the best value for $R^*$ using the equation above for $F(R^*)$.
Note that the ratio of our algorithm would be the minimum of $\alpha$ and $F(R^*)$.
It is easy to verify that by having $\alpha = 0.6929$, we will get $R^* = 0.227617$ 
which maximizes $F(.)$ exactly at $F(R^*)=\alpha$; for this value $\gamma$ will be set to $1.442042$.
Note that these values of $\alpha$, $R^*$ and $\gamma$ will satisfy all the conditions of Theorem \ref{hc:thm:algPC} and the ratio of algorithm
will be $0.6929$. This completes the proof of Theorem \ref{hc:thm:DissimMax}.

\subsection{Improving the ratio to 0.71604: Proof of Theorem \ref{hc:max-dissim}}
In this section we will show how to improve the ratio of our algorithm to $0.71604$. 
A key observation from the past section is that
while we use $\epsilon$ in our conditions, at the end we are comparing the revenue obtained by the algorithm
with $nW$ and not with $OPT = (1-\epsilon) nW$. In other words, the function $F(.)$ is obtained based on comparing
the solution of the algorithm with $nW$ (in both Theorems \ref{hc:thm:algPR} and \ref{hc:thm:algPC}).
This is because we do not have any lower-bound on $\epsilon$ and it really could be arbitrarily close to $0$.
However, when $\epsilon$ is close to $0$, then our conditions, especially Condition (\ref{hc:cond:Reg}),
will let us choose better $\gamma$ and $R^*$. We will take advantage of this to find a series of values for
$\gamma$ and $R^*$, such that if we run our algorithm with these parameters and take the better of all solutions
then the ratio will be $0.71604$.

Recall that to maximize the ratio of our algorithm we first set $\alpha$ and use our conditions to find the 
best $R^*$ and $\gamma$ to maximize $F(R^*)$ (using Equation (\ref{eqn:FR})) 
and finally the ratio would be the minimum of $\alpha$ and $F(R^*)$.
If $F(R^*) < \alpha$, it means that our initial choice of $\alpha$ was too high and vise versa.
Suppose we set $\alpha$ to some value such that based on that value we obtain 
$R^*_1$ and $\gamma_1$ that are maximizing $F(R^*_1)$ but $F(R^*_1) < \alpha$.
As we mentioned earlier, since the actual ratio of our algorithm is $\frac{F(R^*_1)}{1-\epsilon}$ (since we assume
$OPT=(1-\epsilon)nW$), there is an $\epsilon_2<\epsilon$ 
where $\frac{F(R^*_1)}{1-\epsilon_2} = \alpha$ and for all values of $\epsilon$ where
$\epsilon_2 \leq \epsilon \leq \epsilon_1$ (we define $\epsilon_1=\frac{\alpha-2/3}{\alpha}$)
the algorithm with parameters $R^*_1$ and $\gamma_1$
is actually an $\alpha$-approximation and all three conditions of Theorem \ref{hc:thm:algPC} are met.
But what happens if $\epsilon < \epsilon_2$? In this case the bounds of those three conditions are in fact
better and we can set the parameters $R^*$ and $\gamma$ differently to obtain better ratios.

As an example, suppose that we want to improve the ratio to $\alpha = 0.7$.
Using the bound $\epsilon \leq \epsilon_1=\frac{\alpha-2/3}{\alpha} = 1/21$, we use Condition
(\ref{hc:cond:Reg}) and set $\gamma = (1-R^*)^2 \cdot \frac{3\alpha_{GW}-2}{9\alpha_{GW}^2} \cdot 21 = 1.9218297 \cdot (1-R^*)^2$ and find the best $R^*$ to maximize $F(R^*)$ in Equation (\ref{eqn:FR}).
It turns out the maximum of $F(.)$ is obtained at 
$R^*_1 = 0.173114$ and $\gamma_1 = 1.314032$, with $F(R^*_1) = 0.682358$.
This means that by running our algorithm with parameters $R^*_1$ and $\gamma_1$ we will have the revenue at least $0.682358 nW$.
To make this an $\alpha$-approximation (for $\alpha = 0.7$), we must have a lower-bound on $\epsilon$.
More precisely $0.682358 nW \geq \alpha OPT = 0.7 \cdot (1-\epsilon) nW$ only if $\epsilon \geq \epsilon_2 = 0.02520285$.
This means that if $\epsilon_2 \leq \epsilon \leq \epsilon_1$, then 
the revenue gained by our algorithm using parameters $R^*_1$ and $\gamma_1$ is at least $\alpha OPT$.
Recall that if $\epsilon > \epsilon_1$, then random partitioning is an $\alpha$-approximation.
The only remaining case is
if $\epsilon < \epsilon_2$, and in this case we can set $\gamma = (1-R^*)^2 \cdot \frac{3\alpha_{GW}-2}{9\alpha_{GW}^2} \cdot \frac{1}{\epsilon_2} = 3.6311637 (1-R^*)^2$ and again find the best $R^*$ to maximize $F(R^*)$.
We must set $R^*_2 = 0.316719$ and $\gamma_2 = 1.695292$ to maximize $F(R^*_2) = 0.714896$ which is even greater than $\alpha$. This means that if $\epsilon < \epsilon_2$, 
then the revenue gained by our algorithm using parameters $R^*_2$ and $\gamma_2$ is at least $\alpha OPT$.
Note that for $\gamma_1,R^*_1$ and $\epsilon_1=\frac{\alpha-2/3}{\alpha}$ and for
$\gamma_2,R^*_2,\epsilon_2$ all three conditions of Theorem \ref{hc:thm:algPC} are met.
Because we do not know the exact value of $\epsilon$, the only thing we need to do is to take the better of the following three:
\begin{enumerate}
	\item Run Random Partitioning Algorithm all the way.
	\item Run Algorithm \ref{hc:alg:peel-off-max-cut-random} with parameters $R^*_1$ and $\gamma_1$.
	\item Run Algorithm \ref{hc:alg:peel-off-max-cut-random} with parameters $R^*_2$ and $\gamma_2$.
\end{enumerate}

So, with only two sets of parameters for $R^*$ and $\gamma$ and running Algorithm \ref{hc:alg:peel-off-max-cut-random}
for each (and taking the better of the results as well as random partitioning) we can get a 0.7-approximation.
It turns out using this approach and running the algorithm with several more parameters we can get slight improvement.
More specifically, starting with $\alpha=0.716$ we will find 83 triples of values $\gamma_i,R^*_i,\epsilon_i$ 
($1\leq i\leq 83$) such that for each triple, the three conditions of Theorem \ref{hc:thm:algPC} are met 
and for each pair of $\gamma_i,R^*_i$ if we run Algorithm \ref{hc:alg:peel-off-max-cut-random} with these parameters
if the actual value of $\epsilon$ is between $\epsilon_{i+1}$ and $\epsilon_{i}$ (with the assumption of $\epsilon_{84}=0$) then the revenue of the solution is at least $\alpha OPT$.
These 83 triples of $\gamma_i,R^*_i,\epsilon_i$ are obtained using a simple computer program and are listed in Table \ref{hc:table:716} in Appendix \ref{app:parameters}.
By choosing $\alpha=0.71604$ we will have 211 triples of $\gamma_i,R^*_i,\epsilon_i$ and ratio will be at least
$0.71604$.

\section{Proof of Theorem \ref{hc:thm:h}}

We discuss a method for bounding the cost of any top-down algorithm regarding this new objective function (\ref{dissim-hij}).
When a top-down algorithm splits cluster $A \cup B$ into clusters $A$ and $B$, 
the least common ancestor for any two pair of nodes $a \in A$ and $b \in B$ is determined.
But for each two nodes $a,a' \in A$ (also for $b,b' \in B$),
the distance between their least common ancestor and the root in the final tree is increased by 1.
Given this, we define the cost of partitioning a cluster into two clusters $A$ and $B$ as the following formula:
\begin{equation*}
	split\text{-}cost_H(A,B) = \sum_{a,a' \in A} w_{a,a'} + \sum_{b,b' \in B} w_{b,b'}.
\end{equation*}

Notice that the total cost $Cost_H(T)$ is exactly the sum of the $split$-$cost$ over all internal nodes of tree $T$ plus
and additional $\sum_{i,j \in [n]} w_{i,j}$.
We model the problem as a multi-step game. At each step we have to choose a cut $(S_1, S_2)$ in $G$ 
and remove its cutting edges $\delta(S_1) = \delta(S_2)$.
The cost of step $0$ is $W$ and the cost of each step is the total weight of all the remaining edges.
Then the total cost of the algorithm would be summation over all iterations.
Considering this view, it would make sense to choose the maximum weighted cut at each step.
One of our main results is to analyze this algorithm by showing that it has a constant approximation ratio.
Note that this algorithm has a large approximation ratio when we want to minimize 
Dasgupta's objective function $Cost(T)$ or maximize its dual $Rev_{Dual}(T)$.

The method we use to prove Theorem \ref{hc:thm:h} is inspired by the work done by Feige et al.~\cite{FLT04}.
We assume all the weights are integers and for each edge $(i,j) \in E$ we replace it with $w_{i,j}$ many unit weight edges to make the graph unweighted.
Let $OPT$ be the cost of the optimal tree and $AMC$ be the cost of the approximated Max-Cut algorithm.
For $i = 1, 2, ...$ let $X_i^*$ and $X_i$ be the sets of edges removed at step $i$ by the optimal tree and approximated Max-Cut algorithm, respectively.
Also, let $R_i^*$ and $R_i$ be the sets of edges remained after step $i$ of the algorithms. We also set $R_0^* = R_0 = E$.
Notice that for each $i \geq 0$ we have $R_i = E \setminus \cup_{j=1}^{i-1} X_i$, 
$R_i = X_{i+1} \cup R_{i+1}$ and $R_i = \cup_{j=i+1}^n X_j$ (same for $R_i^*$).

Now observe that we have:
\begin{equation*}
	OPT = \sum_{i=0}^n \vert R_i^* \vert = \sum_{i=1}^n i \cdot \vert X_i^* \vert \quad \quad \text{ and } \quad \quad AMC = \sum_{i=0}^n \vert R_i \vert = \sum_{i=1}^n i \cdot \vert X_i \vert.
\end{equation*}

Now we define $p_1 = 0$ and for each $i \geq 2$ we define $p_i = \frac{\vert R_{i-1} \vert}{\vert X_i \vert}$ and for each $e \in E$ we set $p_e = p_i$ if $e \in X_i$, we then have:
\begin{equation*}
	\sum_{e \in E} p_e = \sum_{i=1}^n \sum_{e \in X_i} p_i = \sum_{i = 1}^n \vert X_i \vert \cdot p_i = \sum_{i = 2}^n \vert X_i \vert \cdot \frac{\vert R_{i-1} \vert}{\vert X_i \vert} = \sum_{i=1} \vert R_i \vert = AMC - \vert R_0 \vert.
\end{equation*}

Observe that if you use an algorithm that chooses a cut that contains at least half of the edges at each step, 
including approximated maximum cut, then for each $e \in E \setminus X_1$ we have $p_e \leq 2$.
Note that $p_e = p_1 = 0$ for each $e \in X_1$, so we have the following upper-bound for $AMC$:
\begin{equation*}
	AMC - \lvert R_0 \rvert = \sum_{e \in E \setminus X_1} p_e \leq \sum_{e \in E \setminus X_1}^n 2 = 2 \lvert R_1 \rvert \quad \quad \Rightarrow \quad \quad AMC \leq \lvert R_0 \rvert + 2 \lvert R_1 \rvert.
\end{equation*}

Now we consider two cases. For the first case we assume $\lvert R_1^* \rvert > \frac{2\alpha_{GW} - 1}{2\alpha_{GW}} \cdot \lvert R_0 \rvert$,
and the second case is when $\lvert R_1^* \rvert \leq \frac{2\alpha_{GW} - 1}{2\alpha_{GW}} \cdot \lvert R_0 \rvert$, where $\alpha_{GW}$ is the ratio of
approximate max-cut algorithm.

For the first case, note that $X_1$ is the cut the approximated maximum cut chooses in the first step
and it has at least half of the edges, so $\lvert R_1 \rvert \leq \frac{\lvert E \rvert}{2} = \frac{\lvert R_0 \rvert}{2}$.
If use the lower bound of $\lvert R_0 \rvert + \lvert R_1^* \rvert$ for $OPT$, then we have:
\begin{equation*}
	\frac{AMC}{OPT} \leq \frac{\lvert R_0 \rvert + 2 \lvert R_1 \rvert}{\lvert R_0 \rvert + \lvert R_1^* \rvert} < \frac{\lvert R_0 \rvert + \lvert R_0 \rvert}{\lvert R_0 \rvert + \frac{2\alpha_{GW} - 1}{2\alpha_{GW}} \lvert R_0 \rvert} = \frac{2}{1+\frac{2\alpha_{GW} - 1}{2\alpha_{GW}}} = \frac{4\alpha_{GW}}{4\alpha_{GW}-1}.
\end{equation*}

For the second case, note that $X_1^*$ is a cut in $G$ and $X_1$ has at least $\alpha_{GW}$ fraction of the maximum cut or any other cut including $X_1^*$, so:
\begin{equation*}
	\lvert X_1 \rvert \geq \alpha_{GW} \cdot \lvert X_1^* \rvert.
\end{equation*}

So, for $R_1$ we have:
\begin{equation*}
	\lvert R_1 \rvert = \vert R_0 \vert - \vert X_1 \vert \leq \vert R_0 \vert - \alpha_{GW} \cdot \lvert X_1^* \rvert = \vert R_0 \vert - \alpha_{GW} \cdot (\vert R_0 \vert - \vert R_1^* \vert) = (1-\alpha_{GW}) \vert R_0 \vert + \alpha_{GW} \cdot \vert R_1^* \vert.
\end{equation*}

Then again, we use the lower bound of $\lvert R_0 \rvert + \lvert R_1^* \rvert$ for $OPT$ to bound the ratio of the Approximated Max-Cut algorithm:
\begin{equation*}
	\frac{AMC}{OPT} \leq \frac{\lvert R_0 \rvert + 2 \lvert R_1 \rvert}{\lvert R_0 \rvert + \lvert R_1^* \rvert} \leq \frac{\lvert R_0 \rvert + 2(1-\alpha_{GW}) \vert R_0 \vert + 2\alpha_{GW} \cdot \vert R_1^* \vert}{\lvert R_0 \rvert + \lvert R_1^* \rvert} = \frac{(3-2\alpha_{GW}) \vert R_0 \vert + 2\alpha_{GW} \cdot \vert R_1^* \vert}{\lvert R_0 \rvert + \lvert R_1^* \rvert}.
\end{equation*}

Remember that in this case $\lvert R_0 \rvert \geq \frac{2\alpha_{GW}}{2\alpha_{GW} - 1} \cdot \lvert R_1^* \rvert$, so:
\begin{equation*}
	\frac{AMC}{OPT} \leq (3-2\alpha_{GW}) + \frac{(4\alpha_{GW}-3) \cdot \vert R_1^* \vert}{\lvert R_0 \rvert + \lvert R_1^* \rvert} \leq (3-2\alpha_{GW}) + \frac{(4\alpha_{GW}-3) \cdot \vert R_1^* \vert}{\frac{2\alpha_{GW}}{2\alpha_{GW} - 1} \cdot \lvert R_1^* \rvert + \lvert R_1^* \rvert}.
\end{equation*}
\begin{equation*}
	\Rightarrow \quad \frac{AMC}{OPT} \leq (3-2\alpha_{GW}) + \frac{(4\alpha_{GW}-3)(2\alpha_{GW}-1)}{(4\alpha_{GW}-1)} = \frac{4\alpha_{GW}}{4\alpha_{GW}-1}.
\end{equation*}
	
Thus, in either case, the cost of the solution returned by the recursively finding an $\alpha_{GW}$-approximate max-cut is at most $\frac{4\alpha_{GW}}{4\alpha_{GW}-1}$
times the optimum. This completes the proof of Theorem \ref{hc:thm:h}.

\bibliographystyle{abbrv}
\bibliography{HC-v9}

\appendix
\section{Missing Proofs}\label{app:proofs}
\begin{pproof}{Lemma \ref{hc:lm:c}}
To prove this lemma, we need to show that $G(c)$ is increasing with $c$. And note that a function like $\frac{X-Yc}{1-Zc}$,
where $X, Y$ and $Z$ are all independent of $c$, is ascending if and only if $XY \geq Z$.
In $G(c)$ we have $X = (1-R) - (1-R^*)\frac{3\alpha_{GW}-2}{3\alpha_{GW}}$, $Y = 2\gamma$ and $Z = \frac{(1-R^*)}{\epsilon} \frac{3\alpha_{GW}-2}{3\alpha_{GW}}$.
Now recall that we are in a case where $R \leq R^*$, so:

\begin{equation*}
	X \geq (1-R^*) - (1-R^*)\frac{3\alpha_{GW}-2}{3\alpha_{GW}} = (1-R^*) \cdot (1 - \frac{3\alpha_{GW}-2}{3\alpha_{GW}}) = (1-R^*) \frac{2}{3\alpha_{GW}}.
\end{equation*}

Now using the definition of $X,Y,Z$ and the bound above for $X$, Condition (\ref{hc:cond:Reg}) implies $XZ \geq Y$, which completes the proof of the lemma.
\end{pproof}

\begin{pproof}{Lemma \ref{hc:lm:c2}}
First we do some simplifications to $\tilde{F}(R)$:

\begin{eqnarray*}
	\tilde{F}(R) &=& \left(1-\frac{R}{4\gamma}\right) R + \alpha_{GW} \left(1-\frac{R}{2\gamma}\right) \left((1-R) - (1-R^*)\frac{3\alpha_{GW}-2}{3\alpha_{GW}}\right)\\
	&=& \alpha_{GW} - (1-R^*)(\alpha_{GW}-2/3) - R \cdot \left( \alpha_{GW} + \frac{\alpha_{GW}}{2\gamma} - 1 - \frac{(1-R^*)(\alpha_{GW}-2/3)}{2\gamma} \right) + R^2 \cdot \left(\frac{\alpha_{GW}}{2\gamma} - \frac{1}{4\gamma}\right).
\end{eqnarray*}

Let $X=\alpha_{GW} - (1-R^*)(\alpha_{GW}-2/3)$, $Y=\left( \alpha_{GW} + \frac{\alpha_{GW}}{2\gamma} - 1 - \frac{(1-R^*)(\alpha_{GW}-2/3)}{2\gamma} \right)$, and $Z=\left(\frac{\alpha_{GW}}{2\gamma} - \frac{1}{4\gamma}\right)$;
then $\tilde{F}(R)=X-YR+ZR^2$. Observe that since $\gamma>0$, $Z$ is positive.
Also, we will soon show that $Y$ is positive too. Thus, the minimum of $\tilde{F}(R)$ is at $R=\frac{Y}{2Z}$.
If we show that $R^*\leq \frac{Y}{2Z}$ then we have shown that $\tilde{F}(R)$ is
totally descending in the interval $0 \leq R \leq R^*$. So, it is enough to show:

\begin{equation*}
	R^* \leq \frac{\alpha_{GW} + \frac{\alpha_{GW}}{2\gamma} - 1 - \frac{(1-R^*)(\alpha_{GW}-2/3)}{2\gamma}}{2 (\frac{\alpha_{GW}}{2\gamma} - \frac{1}{4\gamma})}.
\end{equation*}

Or equivalently we need to have:
\begin{equation*}
 R^*(\alpha_{GW}-1/2) \leq R^*(\alpha_{GW}/2-1/3) + 1/3 - \gamma (1-\alpha_{GW}).
\end{equation*}

This inequality is exactly what we have in Condition (\ref{hc:cond:Rg}).
Also, since this condition implies that $R^*< \frac{Y}{2Z}$ and we know that $R^*>0$, this also means $Y>0$.
Thus, the minimum of $\tilde{F}(R)$ is at $R=R^*$ and this completes the proof of the lemma.
\end{pproof}
\section{$R^*$ and $\gamma$ Values to have a $0.716$-approximation}\label{app:parameters}
\renewcommand{\arraystretch}{1.3}
\begin{longtable}{|c|c|c|c|}
	\caption{Values of parameters $R^*$, $\gamma$ and $\epsilon$ to run Algorithm \ref{hc:alg:peel-off-max-cut-random} and take the best and improve the approximation ratio to $0.716$}
	\label{hc:table:716}\\
	\hline
	\textbf{$R^*$} & \textbf{$\gamma$} & \textbf{$F(R^*)$} & \textbf{$\epsilon$}\\
	\hline
	$R^*_{1} = 0.07852$ & $\gamma_{1} = 1.1278206$ & $F(R^*_{1}) = 0.670089$ & $\epsilon_{1} = 0.0689014$\\
	$R^*_{2} = 0.09761$ & $\gamma_{2} = 1.1621875$ & $F(R^*_{2}) = 0.67189$ & $\epsilon_{2} = 0.0641222$\\
	$R^*_{3} = 0.10809$ & $\gamma_{3} = 1.1817295$ & $F(R^*_{3}) = 0.673031$ & $\epsilon_{3} = 0.0616056$\\
	$R^*_{4} = 0.11489$ & $\gamma_{4} = 1.1946796$ & $F(R^*_{4}) = 0.673828$ & $\epsilon_{4} = 0.0600121$\\
	$R^*_{5} = 0.11973$ & $\gamma_{5} = 1.2039729$ & $F(R^*_{5}) = 0.67442$ & $\epsilon_{5} = 0.0588994$\\
	$R^*_{6} = 0.12337$ & $\gamma_{6} = 1.2110441$ & $F(R^*_{6}) = 0.67488$ & $\epsilon_{6} = 0.0580723$\\
	$R^*_{7} = 0.12622$ & $\gamma_{7} = 1.216645$ & $F(R^*_{7}) = 0.67525$ & $\epsilon_{7} = 0.0574297$\\
	$R^*_{8} = 0.12853$ & $\gamma_{8} = 1.2211909$ & $F(R^*_{8}) = 0.675554$ & $\epsilon_{8} = 0.0569138$\\
	$R^*_{9} = 0.13045$ & $\gamma_{9} = 1.2249629$ & $F(R^*_{9}) = 0.67581$ & $\epsilon_{9} = 0.0564888$\\
	$R^*_{10} = 0.13206$ & $\gamma_{10} = 1.2282025$ & $F(R^*_{10}) = 0.676029$ & $\epsilon_{10} = 0.0561313$\\
	$R^*_{11} = 0.13346$ & $\gamma_{11} = 1.2309501$ & $F(R^*_{11}) = 0.676219$ & $\epsilon_{11} = 0.0558255$\\
	$R^*_{12} = 0.13467$ & $\gamma_{12} = 1.2333801$ & $F(R^*_{12}) = 0.676386$ & $\epsilon_{12} = 0.05556$\\
	$R^*_{13} = 0.13574$ & $\gamma_{13} = 1.2355205$ & $F(R^*_{13}) = 0.676535$ & $\epsilon_{13} = 0.0553267$\\
	$R^*_{14} = 0.13669$ & $\gamma_{14} = 1.2374426$ & $F(R^*_{14}) = 0.676668$ & $\epsilon_{14} = 0.0551194$\\
	$R^*_{15} = 0.13755$ & $\gamma_{15} = 1.2391589$ & $F(R^*_{15}) = 0.676788$ & $\epsilon_{15} = 0.0549334$\\
	$R^*_{16} = 0.13832$ & $\gamma_{16} = 1.2407467$ & $F(R^*_{16}) = 0.676898$ & $\epsilon_{16} = 0.0547652$\\
	$R^*_{17} = 0.13903$ & $\gamma_{17} = 1.2421809$ & $F(R^*_{17}) = 0.676999$ & $\epsilon_{17} = 0.0546119$\\
	$R^*_{18} = 0.13968$ & $\gamma_{18} = 1.2435107$ & $F(R^*_{18}) = 0.677092$ & $\epsilon_{18} = 0.0544711$\\
	$R^*_{19} = 0.14029$ & $\gamma_{19} = 1.2447186$ & $F(R^*_{19}) = 0.677178$ & $\epsilon_{19} = 0.0543411$\\
	$R^*_{20} = 0.14085$ & $\gamma_{20} = 1.2458665$ & $F(R^*_{20}) = 0.677259$ & $\epsilon_{20} = 0.0542204$\\
	$R^*_{21} = 0.14138$ & $\gamma_{21} = 1.2469242$ & $F(R^*_{21}) = 0.677335$ & $\epsilon_{21} = 0.0541076$\\
	$R^*_{22} = 0.14187$ & $\gamma_{22} = 1.2479438$ & $F(R^*_{22}) = 0.677406$ & $\epsilon_{22} = 0.0540017$\\
	$R^*_{23} = 0.14234$ & $\gamma_{23} = 1.2488869$ & $F(R^*_{23}) = 0.677474$ & $\epsilon_{23} = 0.0539018$\\
	$R^*_{24} = 0.14278$ & $\gamma_{24} = 1.2497992$ & $F(R^*_{24}) = 0.677539$ & $\epsilon_{24} = 0.0538072$\\
	$R^*_{25} = 0.1432$ & $\gamma_{25} = 1.2506663$ & $F(R^*_{25}) = 0.6776$ & $\epsilon_{25} = 0.0537172$\\
	$R^*_{26} = 0.14361$ & $\gamma_{26} = 1.2514715$ & $F(R^*_{26}) = 0.677659$ & $\epsilon_{26} = 0.0536313$\\
	$R^*_{27} = 0.14399$ & $\gamma_{27} = 1.2522842$ & $F(R^*_{27}) = 0.677716$ & $\epsilon_{27} = 0.0535489$\\
	$R^*_{28} = 0.14437$ & $\gamma_{28} = 1.2530263$ & $F(R^*_{28}) = 0.67777$ & $\epsilon_{28} = 0.0534697$\\
	$R^*_{29} = 0.14473$ & $\gamma_{29} = 1.2537652$ & $F(R^*_{29}) = 0.677823$ & $\epsilon_{29} = 0.0533932$\\
	$R^*_{30} = 0.14508$ & $\gamma_{30} = 1.2544791$ & $F(R^*_{30}) = 0.677875$ & $\epsilon_{30} = 0.0533192$\\
	$R^*_{31} = 0.14542$ & $\gamma_{31} = 1.255175$ & $F(R^*_{31}) = 0.677925$ & $\epsilon_{31} = 0.0532473$\\
	$R^*_{32} = 0.14575$ & $\gamma_{32} = 1.2558593$ & $F(R^*_{32}) = 0.677974$ & $\epsilon_{32} = 0.0531771$\\
	$R^*_{33} = 0.14607$ & $\gamma_{33} = 1.2565379$ & $F(R^*_{33}) = 0.678022$ & $\epsilon_{33} = 0.0531086$\\
	$R^*_{34} = 0.14639$ & $\gamma_{34} = 1.2571865$ & $F(R^*_{34}) = 0.67807$ & $\epsilon_{34} = 0.0530414$\\
	$R^*_{35} = 0.1467$ & $\gamma_{35} = 1.2578398$ & $F(R^*_{35}) = 0.678116$ & $\epsilon_{35} = 0.0529754$\\
	$R^*_{36} = 0.14701$ & $\gamma_{36} = 1.2584728$ & $F(R^*_{36}) = 0.678162$ & $\epsilon_{36} = 0.0529103$\\
	$R^*_{37} = 0.14732$ & $\gamma_{37} = 1.2590901$ & $F(R^*_{37}) = 0.678208$ & $\epsilon_{37} = 0.0528459$\\
	$R^*_{38} = 0.14762$ & $\gamma_{38} = 1.2597255$ & $F(R^*_{38}) = 0.678253$ & $\epsilon_{38} = 0.0527821$\\
	$R^*_{39} = 0.14792$ & $\gamma_{39} = 1.2603536$ & $F(R^*_{39}) = 0.678299$ & $\epsilon_{39} = 0.0527187$\\
	$R^*_{40} = 0.14822$ & $\gamma_{40} = 1.2609785$ & $F(R^*_{40}) = 0.678344$ & $\epsilon_{40} = 0.0526554$\\
	$R^*_{41} = 0.14852$ & $\gamma_{41} = 1.2616042$ & $F(R^*_{41}) = 0.678389$ & $\epsilon_{41} = 0.0525923$\\
	$R^*_{42} = 0.14883$ & $\gamma_{42} = 1.2622052$ & $F(R^*_{42}) = 0.678435$ & $\epsilon_{42} = 0.0525289$\\
	$R^*_{43} = 0.14913$ & $\gamma_{43} = 1.2628447$ & $F(R^*_{43}) = 0.678481$ & $\epsilon_{43} = 0.0524653$\\
	$R^*_{44} = 0.14943$ & $\gamma_{44} = 1.2634973$ & $F(R^*_{44}) = 0.678527$ & $\epsilon_{44} = 0.0524013$\\
	$R^*_{45} = 0.14974$ & $\gamma_{45} = 1.2641378$ & $F(R^*_{45}) = 0.678574$ & $\epsilon_{45} = 0.0523366$\\
	$R^*_{46} = 0.15006$ & $\gamma_{46} = 1.2647704$ & $F(R^*_{46}) = 0.678622$ & $\epsilon_{46} = 0.052271$\\
	$R^*_{47} = 0.15038$ & $\gamma_{47} = 1.2654297$ & $F(R^*_{47}) = 0.67867$ & $\epsilon_{47} = 0.0522044$\\
	$R^*_{48} = 0.1507$ & $\gamma_{48} = 1.2661208$ & $F(R^*_{48}) = 0.67872$ & $\epsilon_{48} = 0.0521367$\\
	$R^*_{49} = 0.15103$ & $\gamma_{49} = 1.2668193$ & $F(R^*_{49}) = 0.678771$ & $\epsilon_{49} = 0.0520674$\\
	$R^*_{50} = 0.15138$ & $\gamma_{50} = 1.2675011$ & $F(R^*_{50}) = 0.678823$ & $\epsilon_{50} = 0.0519965$\\
	$R^*_{51} = 0.15173$ & $\gamma_{51} = 1.2682323$ & $F(R^*_{51}) = 0.678876$ & $\epsilon_{51} = 0.0519237$\\
	$R^*_{52} = 0.15209$ & $\gamma_{52} = 1.26899$ & $F(R^*_{52}) = 0.678932$ & $\epsilon_{52} = 0.0518486$\\
	$R^*_{53} = 0.15246$ & $\gamma_{53} = 1.2697821$ & $F(R^*_{53}) = 0.67899$ & $\epsilon_{53} = 0.0517711$\\
	$R^*_{54} = 0.15285$ & $\gamma_{54} = 1.2705873$ & $F(R^*_{54}) = 0.679049$ & $\epsilon_{54} = 0.0516907$\\
	$R^*_{55} = 0.15325$ & $\gamma_{55} = 1.2714451$ & $F(R^*_{55}) = 0.679112$ & $\epsilon_{55} = 0.051607$\\
	$R^*_{56} = 0.15368$ & $\gamma_{56} = 1.2723064$ & $F(R^*_{56}) = 0.679177$ & $\epsilon_{56} = 0.0515197$\\
	$R^*_{57} = 0.15412$ & $\gamma_{57} = 1.273244$ & $F(R^*_{57}) = 0.679246$ & $\epsilon_{57} = 0.0514283$\\
	$R^*_{58} = 0.15459$ & $\gamma_{58} = 1.2742119$ & $F(R^*_{58}) = 0.679319$ & $\epsilon_{58} = 0.0513321$\\
	$R^*_{59} = 0.15508$ & $\gamma_{59} = 1.2752568$ & $F(R^*_{59}) = 0.679396$ & $\epsilon_{59} = 0.0512306$\\
	$R^*_{60} = 0.1556$ & $\gamma_{60} = 1.2763678$ & $F(R^*_{60}) = 0.679478$ & $\epsilon_{60} = 0.051123$\\
	$R^*_{61} = 0.15616$ & $\gamma_{61} = 1.2775371$ & $F(R^*_{61}) = 0.679566$ & $\epsilon_{61} = 0.0510085$\\
	$R^*_{62} = 0.15676$ & $\gamma_{62} = 1.2787913$ & $F(R^*_{62}) = 0.67966$ & $\epsilon_{62} = 0.0508861$\\
	$R^*_{63} = 0.1574$ & $\gamma_{63} = 1.280162$ & $F(R^*_{63}) = 0.679762$ & $\epsilon_{63} = 0.0507544$\\
	$R^*_{64} = 0.1581$ & $\gamma_{64} = 1.2816267$ & $F(R^*_{64}) = 0.679872$ & $\epsilon_{64} = 0.0506122$\\
	$R^*_{65} = 0.15886$ & $\gamma_{65} = 1.2832312$ & $F(R^*_{65}) = 0.679993$ & $\epsilon_{65} = 0.0504577$\\
	$R^*_{66} = 0.15969$ & $\gamma_{66} = 1.2850016$ & $F(R^*_{66}) = 0.680126$ & $\epsilon_{66} = 0.0502888$\\
	$R^*_{67} = 0.16061$ & $\gamma_{67} = 1.2869469$ & $F(R^*_{67}) = 0.680274$ & $\epsilon_{67} = 0.0501029$\\
	$R^*_{68} = 0.16163$ & $\gamma_{68} = 1.2891241$ & $F(R^*_{68}) = 0.680439$ & $\epsilon_{68} = 0.0498968$\\
	$R^*_{69} = 0.16277$ & $\gamma_{69} = 1.2915835$ & $F(R^*_{69}) = 0.680625$ & $\epsilon_{69} = 0.0496664$\\
	$R^*_{70} = 0.16407$ & $\gamma_{70} = 1.2943462$ & $F(R^*_{70}) = 0.680837$ & $\epsilon_{70} = 0.0494067$\\
	$R^*_{71} = 0.16555$ & $\gamma_{71} = 1.2975399$ & $F(R^*_{71}) = 0.681081$ & $\epsilon_{71} = 0.0491107$\\
	$R^*_{72} = 0.16727$ & $\gamma_{72} = 1.3012328$ & $F(R^*_{72}) = 0.681366$ & $\epsilon_{72} = 0.0487696$\\
	$R^*_{73} = 0.16928$ & $\gamma_{73} = 1.305618$ & $F(R^*_{73}) = 0.681704$ & $\epsilon_{73} = 0.0483715$\\
	$R^*_{74} = 0.17169$ & $\gamma_{74} = 1.310843$ & $F(R^*_{74}) = 0.682112$ & $\epsilon_{74} = 0.0478995$\\
	$R^*_{75} = 0.17461$ & $\gamma_{75} = 1.3172771$ & $F(R^*_{75}) = 0.682614$ & $\epsilon_{75} = 0.0473301$\\
	$R^*_{76} = 0.17825$ & $\gamma_{76} = 1.3253364$ & $F(R^*_{76}) = 0.68325$ & $\epsilon_{76} = 0.0466283$\\
	$R^*_{77} = 0.18292$ & $\gamma_{77} = 1.3357437$ & $F(R^*_{77}) = 0.68408$ & $\epsilon_{77} = 0.0457406$\\
	$R^*_{78} = 0.18912$ & $\gamma_{78} = 1.3497665$ & $F(R^*_{78}) = 0.68521$ & $\epsilon_{78} = 0.0445811$\\
	$R^*_{79} = 0.19775$ & $\gamma_{79} = 1.3696989$ & $F(R^*_{79}) = 0.686838$ & $\epsilon_{79} = 0.0430022$\\
	$R^*_{80} = 0.21061$ & $\gamma_{80} = 1.4001298$ & $F(R^*_{80}) = 0.689369$ & $\epsilon_{80} = 0.0407297$\\
	$R^*_{81} = 0.23172$ & $\gamma_{81} = 1.4523369$ & $F(R^*_{81}) = 0.693804$ & $\epsilon_{81} = 0.0371936$\\
	$R^*_{82} = 0.27259$ & $\gamma_{82} = 1.5620635$ & $F(R^*_{82}) = 0.703325$ & $\epsilon_{82} = 0.0309996$\\
	$R^*_{83} = 0.32069$ & $\gamma_{83} = 1.7081291$ & $F(R^*_{83}) = 0.716$ & $\epsilon_{83} = 0.0177022$\\
	\hline
\end{longtable}

\end{document}